
\documentclass[journal]{IEEEtran}

\usepackage{amsmath,amssymb,graphicx}
\usepackage{amsthm}
\usepackage{float}


\title{Compressed Beamforming in Ultrasound Imaging}
%

\author{\IEEEauthorblockN{Noam Wagner, Yonina C. Eldar and  Zvi Friedman}


\thanks{This work was supported in part by a Magneton grant from the Israel Ministry of Industry and Trade.}
\thanks{N. Wagner and Y. C. Eldar are both with the Department of Electrical Engineering, Technion-Israel Institute of Technology, Haifa 32000, Israel (e-mails: noamwa@techunix.technion.ac.il, yonina@ee.technion.ac.il).}
\thanks{Z. Friedman is with the Department of Biomedical Engineering, Technion-Israel Institute of Technology, Haifa 32000, Israel.  He is also with GE Health-care, Haifa, Israel (e-mail: zvi.friedman@med.ge.com).}}

\newtheorem{mydef}{Proposition}

\begin{document}
%
\maketitle
\begin{abstract}
Emerging sonography techniques often require increasing the number of transducer elements involved in the imaging process.   Consequently, larger amounts of data must be acquired and processed.  The significant growth in the amounts of data affects both machinery size and power consumption.  Within the classical sampling framework, state of the art systems reduce processing rates by exploiting the bandpass bandwidth of the detected signals.  It has been recently shown, that a much more significant sample-rate reduction may be obtained, by treating ultrasound signals within the Finite Rate of Innovation framework.  These ideas follow the spirit of Xampling, which combines classic methods from sampling theory with recent developments in Compressed Sensing.  Applying such low-rate sampling schemes to individual transducer elements, which detect energy reflected from biological tissues, is limited by the noisy nature of the signals.  This often results in erroneous parameter extraction, bringing forward the need to enhance the SNR of the low-rate samples.  In our work, we achieve SNR enhancement, by beamforming the sub-Nyquist samples obtained from multiple elements. We refer to this process as ``compressed beamforming".  Applying it to cardiac ultrasound data, we successfully image macroscopic perturbations, while achieving a nearly eight-fold reduction in sample-rate, compared to standard techniques.   

\end{abstract}
\begin{keywords}
Array Processing, Beamforming, Compressed Sensing (CS), Finite Rate of Innovation (FRI), Ultrasound, Xampling
\end{keywords}
\section{Introduction}
\label{sec:01}  
Diagnostic sonography allows visualization of body tissues, by radiating them with acoustic energy pulses, which are transmitted from an array of transducer elements.  The image typically comprises multiple scanlines, each constructed by integrating data collected by the transducers, following the transmission of an energy pulse along a narrow beam.  As the pulse propagates, echoes are scattered by density and propagation-velocity perturbations in the tissue~\cite{Jensen01}, and detected by the transducer elements.  Averaging the detected signals, after their alignment with appropriate time-varying delays, allows localization of the scattering structures, while improving the Signal to Noise Ratio (SNR)~\cite{Szabo01}.   The latter process is referred to as beamforming.  Performed digitally, beamforming requires that the analog signals, detected by the transducers, first be sampled. Confined to classic Nyquist-Shannon sampling theorem~\cite{Shannon01}, the sampling rate must be at least twice the bandwidth, in order to avoid aliasing.  

As imaging techniques develop, the amount of elements involved in each imaging cycle typically increases.  Consequently, the rates of data which need to be transmitted from the system front-end, and then processed by the beamformer, grow significantly.  The growth in transmission and processing rates inevitably effects both machinery size and power consumption.   Consequently, in recent years there has been growing interest in reducing the amounts of data as close as possible to the system front-end.  In fact, such reduction is already possible within the classical sampling framework:  state of the art devices digitally downsample the data at the front-end, by exploiting the fact that the signal is modulated onto a carrier, so that the spectrum essentially occupies only a portion of its entire base-band bandwidth.  The preliminary sample rate remains unchanged, since the demodulation is performed in the digital domain.  Nevertheless, a key to significant data compression lies beyond the classical sampling framework.  

Indeed, the emerging Compressive Sensing (CS) framework~\cite{Candes01, Davenport01} states, that sparse signals may be accurately reconstructed from a surprisingly small amount of coefficients.  Complementary ideas rise from the Finite Rate of Innovation (FRI) framework~\cite{Vetterli01}, in which the signal is assumed to have a finite number of degrees of freedom per unit time.  Many classes of FRI signals can be recovered from samples taken at the rate of innovation~\cite{Michaeli01}.  For a detailed review of previously proposed FRI methods, the reader is referred to \cite{Uriguen01}.  Combining the latter notions with classical sampling methods, the developing Xampling framework~\cite{Mishali01,Mishali05,Mishali06} involves methods for fully capturing the information carried by an analog signal, by sampling it far below the Nyquist-rate.  

Following the spirit of Xampling, Tur et. al. proposed in~\cite{Tur01}, that ultrasound signals be described within the FRI framework.  Explicitly, they assume that these signals, formed by scattering of a transmitted pulse from multiple reflectors, may be modeled by a relatively small number of pulses, all replicas of some known pulse shape.  Denoting the number of reflected pulses by $L$, and the signal's finite temporal support by $\left[0,T\right)$, the detected signal is completely defined by $2L$ degrees of freedom, corresponding to the replicas' unknown time delays and amplitudes.  Based on~\cite{Vetterli01}, the authors formulate the relationship between the signal's Fourier series coefficients, calculated with respect to $\left[0,T\right)$, and its unknown parameters, in the form of a spectral analysis problem.  The latter may be solved using existing techniques, given a subset of Fourier series coefficients, with a minimal cardinality of $2L$.  The sampling scheme is thus reduced to the problem of extracting a small subset of the detected signal's frequency samples.  Two robust schemes are derived in \cite{Tur01,Gedalyahu01}, extracting such a set of coefficients from samples of the signal, taken at sub-Nyquist rates.  The system presented in \cite{Tur01} employs a single processing channel, in which the analog signal is filtered by an appropriate sampling kernel and then sampled with a standard low-rate analog to digital converter (ADC).  The method of~\cite{Gedalyahu01} employs multiple processing channels, each comprising a modulator and an integrator.  These approaches were shown to be more robust than previous FRI techniques and also allow for arbitrary pulse shapes.    

The initial motivation for our work stems from the need to translate the ultrasound Xampling scheme proposed in \cite{Tur01}, into one which achieves the final goal of reconstructing a two-dimensional ultrasound image, by integrating data sampled at multiple transducer elements.  In conventional ultrasound imaging, such integration is achieved by the beamforming process.  The question is how may we implement beamforming, using samples of the detected signals taken at sub-Nyquist rates.    

A straightforward approach is to replace the Nyquist-rate sampling mechanism, utilized in each receiver element, by an  FRI Xampling scheme.  Having estimated the parametric representation of the signal detected in each individual element, we could reconstruct it digitally.  The reconstructed signals can then be further processed via beamforming.  However, the nature of ultrasound signals reflected from real tissues, makes such an approach impractical.  This is mainly due to the detected signals' poor SNR,  which results in erroneous parameter extraction by the Xampling scheme, applied to each element independently.  

Our approach is to generalize the FRI Xampling scheme proposed in~\cite{Gedalyahu01}, such that it integrates beamforming into the low-rate sampling process.  The result is equivalent to that obtained by Xampling the beamformed signal, which exhibits significantly better SNR.  Furthermore,  beamforming practically implies that the array of receivers is dynamically focused along a single scanline.  Consequently, the resulting signal depicts reflections originating in the intersection of the radiated medium with a vary narrow beam.  Such a signal better suits the FRI model proposed in~\cite{Tur01}, which assumes the reflections to be caused by isolated, point-like scatterers.  We refer to our scheme by the term  compressed beamforming, as it transforms the beamforming operator into the compressed domain~\cite{Wagner02,Wagner03}.  Applied to real cardiac ultrasound data obtained from a GE breadboard ultrasonic scanner, our approach successfully images macroscopic perturbations in the tissue while achieving a nearly eight-fold reduction in sampling rate, compared to standard imaging techniques.
 
The paper is organized as follows: in Section \ref{sec:02}, we summarize the general principles of beamforming in ultrasound imaging. In Section \ref{sec:03} we outline the FRI model and its contribution to sample rate reduction in the ultrasound context.  We motivate compressed beamforming  in Section \ref{sec:04}, considering the nature of ultrasound signals reflected from biological tissues.  Beamforming and FRI Xampling are combined in Section \ref{sec:05}, where we propose that the signal obtained by beamforming may be treated within the FRI framework.  Following this observation, we derive our first compressed beamforming scheme, which operates on low-rate samples taken at the individual receivers.  This approach is then further simplified in Section \ref{sec:06}.  In Section \ref{sec:07} we focus on image reconstruction from the parametric representation obtained by either Xampling scheme.  In this context, we generalize the signal model proposed in \cite{Tur01}, allowing additional unknown phase shifts of the detected pulses. We then discuss an alternative recovery approach, based on CS.  Simulations comparing the performance of several recovery methods are provided in Section \ref{sec:08}.  Finally, experimental results obtained for cardiac ultrasound data are presented in Section \ref{sec:09}.           

\section{Beamforming in Ultrasound Imaging}
\label{sec:02}
\begin{figure}
\begin{minipage}[b]{1.00\linewidth}
 \centering
  \centerline{\includegraphics[width=5.5cm]{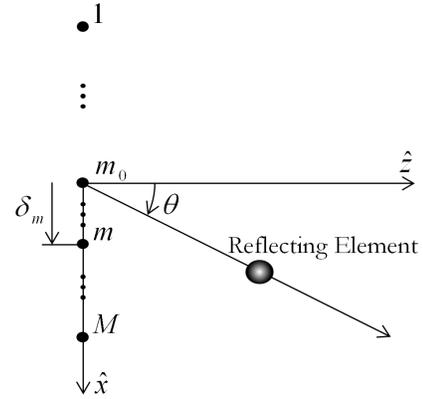}}
\end{minipage}
\caption{Imaging setup: $M$ receivers are aligned along the $\hat{x}$ axis.  The origin is set at the position of the reference receiver, denoted $m_0$.  $\delta_m$ denotes the distance measured from the reference receiver to the $m$th receiver.  The imaging cycle begins when an acoustic pulse is transmitted at direction $\theta$.  Echoes are then reflected from perturbations in the radiated medium.}
\label{Fig:01}
\end{figure}
In this section, we describe a typical B-mode imaging cycle, focusing on the beamforming process, carried out during the reception phase.  The latter constitutes a significant block in ultrasound imaging, and plays a major role in our proposed FRI Xampling scheme.  

Consider the array depicted in Fig. \ref{Fig:01}, comprising $M$ transducer elements, aligned along the $\hat{x}$ axis. Denote by $\delta_m$ the distance from the $m$th element to the reference receiver $m_0$, used as the  origin, namely $\delta_{m_0}=0$.  The imaging cycle begins when, at time $t=0$, the array transmits acoustic energy into the tissue.  Subsequently, the elements detect echoes, which originate in density and propagation-velocity perturbations, characterizing the radiated medium.   Denote by $\varphi_m\left(t\right)$ the signal detected by the $m$th receiver.   The acoustic reciprocity theorem~\cite{Kinsler01} suggests, that we may use  the signals detected by multiple transducer elements, in order to probe arbitrary coordinates for reflected energy.  Namely, by combining the detected signals with appropriate time delays, echoes scattered from a chosen coordinate will undergo constructive interference, whereas those originating off this coordinate will be attenuated, due to destructive interference. 

In practice, the array cannot effectively radiate the entire medium simultaneously.  Instead, a pulse of energy is conducted along a relatively narrow beam, whose central axis forms an angle $\theta$ with the $\hat{z}$ axis.  Focusing the energy pulse along such a beam is achieved by applying appropriate time delays to modulated acoustic pulses, transmitted from multiple array elements.  Rather than arbitrarily probing the radiated tissue, we are now forced to adjust the probed coordinate in time, in coordination with the propagation of the transmitted energy.  This practically implies that, combining the detected signals with appropriate time-varying delays, we may obtain a signal, which depicts the intensity of the energy reflected from each point along the central transmission axis.  Throughout the rest of this section, we derive an explicit expression for creating this beamformed signal.    

Assume that the energy pulse, transmitted at $t=0$, propagates at velocity $c$ in the direction $\theta$. At time $t\geq0$, the pulse crosses the coordinate  $\left(x,z\right)=\left(ct\sin\theta,ct\cos\theta\right)$.  Consider a potential reflection, originating in this coordinate, and arriving at the $m$th element.  The distance traveled by such a reflection is: 
\begin{equation}
\label{E:02}
d_m(t;\theta)=\sqrt{\left(ct\cos\theta\right)^2+\left(\delta_m-ct\sin\theta\right)^2}.
\end{equation}  
The time in which the reflection crosses this distance is $d_m\left(t;\theta\right)/c$, so that it reaches the receiver element at time
\begin{equation}
\label{E:03}
\begin{split}
{\hat{\tau}}_m(t;\theta)&=t+\frac{d_m\left(t;\theta\right)}{c}.
\end{split}
\end{equation}  
It is readily seen that ${\hat{\tau}}_{m_0}\left(t;\theta\right)=2t$.  Hence, in order to align the reflection detected in the $m$th receiver with the one detected in  the reference receiver, we need to apply a delay to $\varphi_m\left(t\right)$, such that the resulting signal, ${\hat{\varphi}}_m\left(t;\theta\right)$, satisfies ${\hat{\varphi}}_m\left(2t;\theta\right)= \varphi_m\left({\hat{\tau}}_m\left(t;\theta\right)\right)$.  Denoting $\tau_m\left(t;\theta\right)={\hat{\tau}}_m\left(t/2;\theta\right)$, and using \eqref{E:02}, we obtain the following distorted signal for $t\geq0$: 
\begin{equation}
\label{E:04}
\begin{split}
{\hat{\varphi}}_m\left(t;\theta\right)&=\varphi_m\left({{\tau}}_m\left(t;\theta\right)\right),\\
{\tau}_m\left(t;\theta\right)&=\frac{1}{2}\left(t+\sqrt{t^2-4\gamma_mt\sin\theta+4\gamma_m^2}\right),
\end{split}
\end{equation}
with $\gamma_m=\delta_m/c$.  The aligned signals may now be averaged, resulting in the beamformed signal
\begin{equation}
\label{E:05}
\Phi\left(t;\theta\right)=\frac{1}{M}\sum_{m=1}^M{{\hat{\varphi}}_m\left(t;\theta\right)},
\end{equation}     
which exhibits enhanced SNR compared to $\left\{{\hat{\varphi}}_m\left(t;\theta\right)\right\}_{m=1}^M$.  Furthermore, by its construction,  $\Phi\left(t;\theta\right)$ represents, for every $t\geq0$, the intensity which was measured when focusing the array to ${\bf{p}}\left(t\right)=\left(ct/2\sin\theta,ct/2\cos\theta\right)$.  Therefore, it may eventually be translated into an intensity pattern, plotted along the corresponding ray.   

Although defined over continuous time, ultrasound systems perform the process formulated in \eqref{E:04}-\eqref{E:05} in the digital domain, requiring that the analog signals $\varphi_m\left(t\right)$ first be sampled.  Confined to the classic Nyquist-Shannon sampling theorem, these systems sample the signals at twice their baseband bandwidth, in order to avoid aliasing.  The detected signals typically occupy only a portion of their baseband bandwidth.  Exploiting this fact, some state of the art systems manage to reduce the amount of samples transmitted from the front-end, by down-sampling the data, after demodulation and low-pass filtering.  However, since such operations are carried out digitally, the preliminary sampling-rate remains unchanged.  

To conclude this section, we evaluate the nominal number of samples needed to be taken from each active receiver element in order to obtain a single scanline using standard imaging techniques. Consider an ultrasound system which images to a nominal depth of $r=16 \mbox{cm}$.  The velocity at which the pulse propagates, $c$, varies between $1446\mbox{m/sec}$ (fat) to $1566\mbox{m/sec}$ (spleen)~\cite{Jensen02}. An average value of $1540\mbox{m/sec}$ is assumed by scanners for processing purposes, such that the duration of the detected signal is $T=2r/c\approx 210\mu\mbox{sec}$.  The signal's baseband bandwidth requires a nominal sampling rate of $f_s=16\mbox{Mhz}$, resulting in an overall number of $Tf_s=3360$ real-valued samples.  Assuming that the signal's passband bandwidth is only $4\mbox{MHz}$, the data sampled at Nyquist-rate may be finally down-sampled to 1680 real-valued samples.  These samples, taken from all active receivers, are now processed, according to \eqref{E:04}-\eqref{E:05}, in order to construct the beamformed signal.  Since standard imaging devices carry out beamforming by applying delay and sum operations to the sampled data, the amount of operations required for generating a single scanline is directly related to the sample rate.  

Regardless of our computational power, physical constraints imply that the time required for constructing a single scanline is at least $T$.  This takes into account the round-trip time required for the transmitted pulse to penetrate the entire imaging depth, and for the resulting echoes to cross a similar distance back to the array.  Nevertheless, sufficient computational power may allow construction of several scanlines, within that same time interval, increasing the overall imaging rate.  By using compressed beamforming, we aim at capturing significant information in the imaging plane, while reducing the sampling rate and consequently the processing rate.  This, in turn, may improve the existing trade-off between imaging rates and both machinery size and power consumption.   
  
\section{Sample Rate Reduction Using the FRI Model}
\label{sec:03}
In a pioneer attempt to implement Xampling methodology in the context of ultrasound imaging, \cite{Tur01} suggests that the signal detected in each receiver element may be sampled at a rate far below Nyquist, by modeling it as an FRI signal.  The authors propose that $\varphi_m\left(t\right)$, detected in the $m$th element, be regarded as sum of a relatively small number of pulses, all replicas of some known pulse shape.  Explicitly: 
\begin{equation}
\label{E:01}
\varphi_m\left(t\right)=\sum_{l=1}^L{a_{l,m}h\left(t-t_{l,m}\right)}.
\end{equation} 
Here $L$ is the number of scattering elements, distributed throughout the sector radiated by the transmitted pulse, $t_{l,m}$ denotes the time in which the reflection from the $l$th element arrived at the $m$th receiver, and $a_{l,m}$ denotes the reflection's amplitude, as detected by the $m$th receiver.  Finally, $h\left(t\right)$ denotes the known pulse shape, regarded, in our work, by the term two-way pulse.  The signal in \eqref{E:01} is completely defined by $2L$ real-valued parameters, $\left\{t_{l,m},a_{l,m}\right\}_{l=1}^L$.  

Sampling FRI signals was first treated by Vetterli et. al. \cite{Vetterli01}.  Their approach involves projecting the FRI signal, characterized by $2L$ degrees of freedom per unit time, onto a $2L$-dimensional subspace, corresponding to a subset of its Fourier series coefficients. Having extracted $2L$ frequency samples of the signal, spectral analysis techniques (e.g. annihilating filter~\cite{Stoica01}, matrix pencil~\cite{Tapan01}) may be applied, in order to extract the unknown signal parameters. Applying this solution to the problem formulated in \eqref{E:01}, \cite{Tur01} formalizes the relationship between the ultrasound signal's Fourier series coefficients to its unknown parameters, as a spectral analysis problem. 

Let $T$ be the duration of $\varphi_m\left(t\right)$.  We can then expand $\varphi_m\left(t\right)$ in a Fourier series, with coefficients
\begin{align}
\label{E:06}
\begin{split}
\phi_m\left[k\right]&=\frac{1}{T}\int_{0}^{T}{\varphi_m\left(t\right)e^{-i\frac{2\pi}{T}kt}dt}\\
&=\frac{1}{T}\int_{0}^{T}{\sum_{l=1}^L{a_{l,m}h\left(t-t_{l,m}\right)}e^{-i\frac{2\pi}{T}kt}dt}\\
&=\frac{1}{T}H\left(\frac{2\pi}{T}k\right)\sum_{l=1}^L{a_{l,m}e^{-i\frac{2\pi}{T}kt_{l,m}}},
\end{split}
\end{align} 
where $H\left(\omega\right)$ denotes the Continuous Time Fourier Transform (CTFT) of $h\left(t\right)$. Consider the sequence $\left\{k_{j,m}\right\}_{j=1}^{K_m}$, comprising $K_m$ integers, and define the length-$K_m$ vector $\bf{\Phi_m}$ with $j$th element $\phi_m\left[k_{j,m}\right]$.  Then \eqref{E:06} may be written in matrix form:
\begin{align}
\label{E:07}
\begin{split}
{\bf{\Phi_m}}=\frac{1}{T}\bf{H_m V_m a_m},
\end{split}
\end{align}
where ${\bf{H_m}}$ is a diagonal matrix with diagonal elements $H\left(\frac{2\pi}{T}k_{j,m}\right)$, ${\bf{V_m}}$ contains $e^{-i\frac{2\pi}{T}k_j t_{l,m}}$ as its $\left(j,l\right)$th element, and ${\bf{a_m}}$ is the length $L$ vector, with elements $a_{l,m}$. Choosing $k_{j,m}$ such that $H\left(\frac{2\pi}{T}k_{j,m}\right)\neq 0$,we can express \eqref{E:07} as: 
\begin{align}
\label{E:08}
\begin{split}
{\bf{y_m}}=\bf{V_m a_m},
\end{split}
\end{align} 
where ${\bf{y_m}}=T{\bf{H_m^{-1}}\Phi_m}$. If the values $k_{j,m}$ are a sequence of consecutive indices, then ${\bf{V_m}}$ takes on a Vandermonde form, and has full column rank~\cite{Stoica01} as long as $K_m\geq L$ and the time-delays are distinct, i.e., $t_{i,m} \neq t_{j,m}$, for all $ i\neq j$. The formulation derived in \eqref{E:08} is a standard spectral analysis problem.  As long as $K_m\geq 2L$, it may be solved for the unknown parameters $\left\{t_{l,m},a_{l,m}\right\}_{l=1}^L$, using methods such as annihilating filter~\cite{Stoica01} or matrix pencil~\cite{Tapan01}.  

Having obtained \eqref{E:07}, the sampling scheme reduces to the problem of extracting $K_m$ frequency samples of $\varphi_m\left(t\right)$, where $K_m\geq2L$.  A single-channel Xampling scheme, such as the one derived in \cite{Tur01}, allows robust estimation of such coefficients from point-wise samples of the signal, after filtering it with an appropriate kernel.  The estimation is performed by applying a linear transformation to $p$ complex-valued samples (equivalently, $2p$ real-valued samples) of the filtered signal, requiring that $p\geq K_m$.  In this context,~\cite{Tur01} introduces the Sum of Sincs kernel, which satisfies the necessary constraints, and is additionally characterized by a finite temporal support.  Combining the requirements that $K_m\geq2L$ and $p\geq K_m$, the Xampling scheme proposed in \cite{Tur01} allows reconstruction of the signal detected in each receiver element from a minimal number of $4L$ real-valued samples.  Considering the nominal figures derived in the previous section for standard beamforming, we conclude that, as long as $4L\ll1680$, such a Xampling method may indeed achieve a substantial rate reduction.  

\section{Why Compressed Beamforming?}
\label{sec:04}

Applied to a single receiver element, the Xampling scheme proposed in~\cite{Tur01} achieves good signal reconstruction for an actual ultrasound signal, reflected from a setup of phantom targets.  In principle, we could apply this approach to each receiver element individually, resulting in a parametric representation for each of the signals $\left\{\varphi_m\left(t\right)\right\}_{m=1}^M$.  Being able to digitally reconstruct the detected signals, we could then proceed with the standard beamforming process, outlined in Section \ref{sec:02}, aimed at constructing the corresponding scanline.  Computational effort would have been reduced, by limiting the beamforming process to the support of the estimated pulses.  In fact, we could possibly bypass the beamforming stage, by deriving a geometric model which maps the set of delays, $\left\{t_{l,m}\right\}_{m=1}^M$, associated with the $l$th reflector, to its two-dimensional position ${\bf{p_l}}=\left(x_l,z_l\right)$.  However, applying the proposed FRI Xampling scheme to signals reflected from biological tissues, we face two fundamental obstacles: low SNR and proper interpretation of the estimated signal parameters, considering the profile of the transmitted beam.  These two difficulties may be better understood by examining Fig.~\ref{Fig:02}, which depicts traces acquired for cardiac images of a healthy consenting volunteer using a GE breadboard ultrasonic scanner.  

\begin{figure*}
\begin{minipage}[b]{0.5\linewidth}
\centering
\centerline{\includegraphics[width=7.0cm]{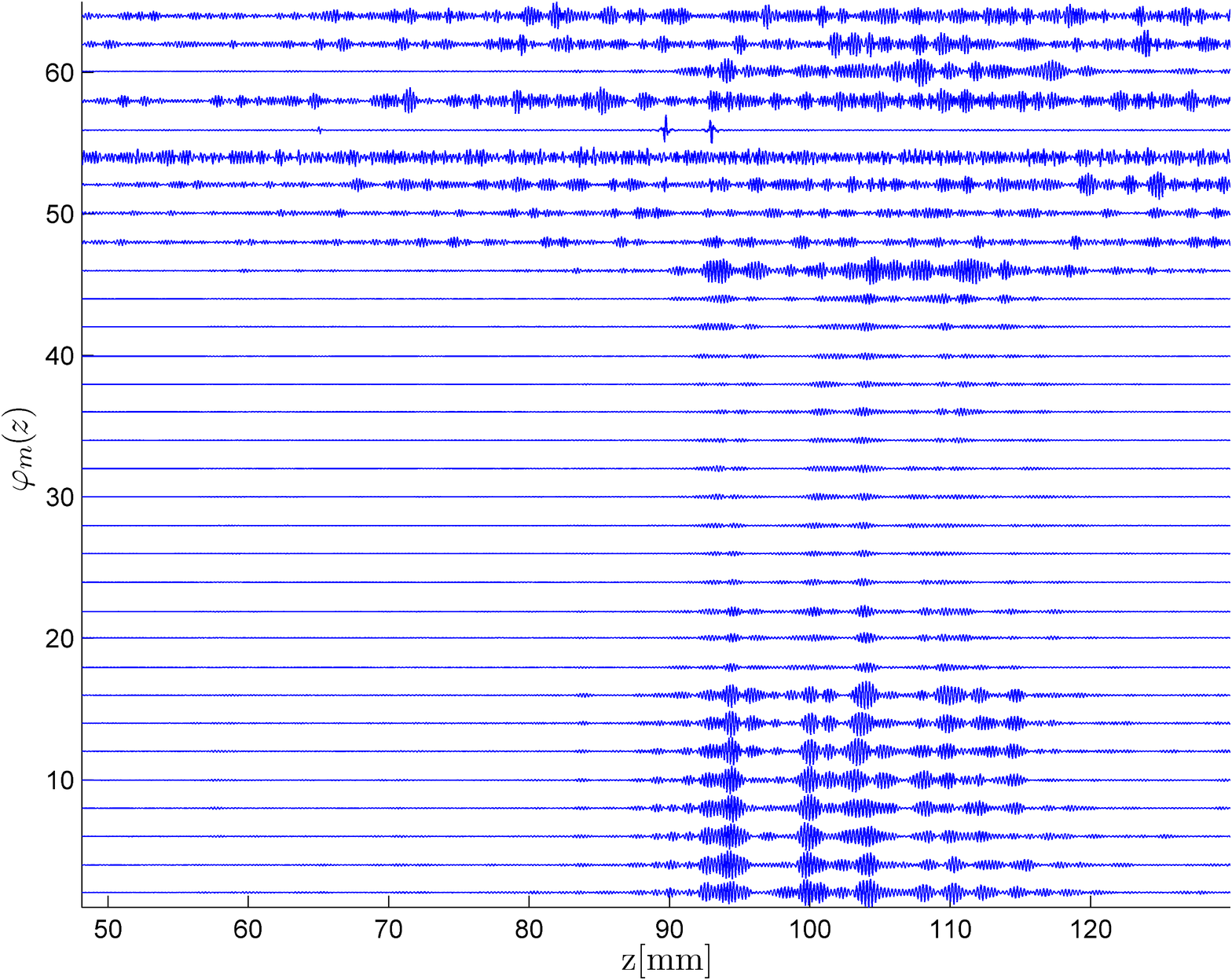}}
\centerline{(a)}
\end{minipage}
\begin{minipage}[b]{0.5\linewidth}
\centering
\centerline{\includegraphics[width=7.0cm]{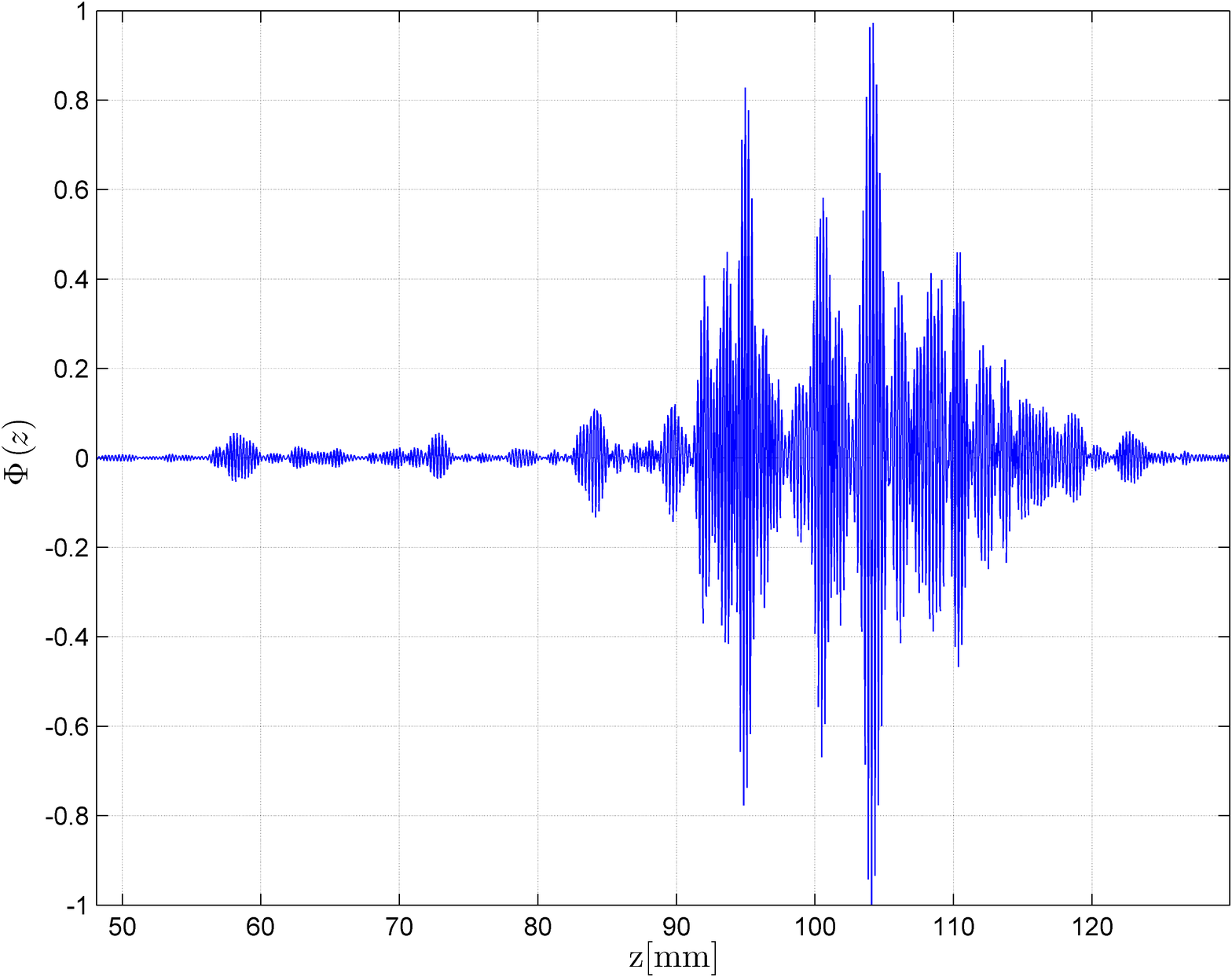}}
\centerline{(b)}
\end{minipage}
\caption{(a) Signals detected for cardiac images following the transmission of a single pulse.  The vertical alignment of each trace matches the index of the corresponding receiver element.  (b) Beamformed signal obtained by combining the detected signals with appropriate, time-varying time delays.  The data is  acquired using a GE breadboard ultrasonic scanner.}
\label{Fig:02}
\end{figure*}

In the left plot (a), are signals detected by 32 of 64 active array elements, following the transmission of a single pulse.  The pulse was conducted along a narrow beam, forming an arbitrary angle $\theta$ with the $\hat{z}$ axis.  The right plot (b) depicts the signal obtained by applying beamforming to the detected signals, as outlined in Section \ref{sec:02}.  Examining the individual traces, one notices the appearance of strong pulses, possibly overlapping, characterized by a typical shape, as proposed in \eqref{E:01}.  Let us assume that we could indeed extract the delays and 
amplitudes of these pulses, by applying the proposed FRI Xampling scheme to each element.  We suggested that beamforming could be bypassed, by deriving a geometric model for estimating the two-dimensional position of a scattering element, based on the delays of pulses associated with it, yet estimated in different receivers.  In order to apply such a model, we must first be able to match corresponding pulses across the detected signals.  However, referring to the practical case depicted in (a), we notice that such a task is not at all trivial - the individual signals depict reflections, originating from the entire sector, radiated by the transmitted pulse. These reflections may, therefore, vary significantly across traces.  In fact, some pulses, visible in several traces, are not at all apparent in other traces.  In contrast, the beamformed signal, by its construction, depicts intensity of reflections originating from along the central transmission axis, while attenuating reflections originating off this axis.

Attempting to apply FRI Xampling to each receiver element individually, we encounter an even more fundamental obstacle, at the earlier stage of extracting the signal's parametric representation from its low-rate samples. The individual traces contain high levels of noise. The noisy components, especially noticeable in traces 54 and 64, rise mainly from constructive and destructive interference of acoustic waves, reflected by dense, sub-wavelength scatterers in the tissue.  The latter are typically manifested as granular texture in the ultrasound image, called speckle, after a similar effect in laser optics~\cite{Szabo01}. The noisy components inherently induce erroneous results, when attempting to sample and reconstruct the FRI components using the Xampling approach.  In extreme scenarios, where the noise masks the FRI component, the extracted parameters will be meaningless, such that any attempt to cope with errors in the parametric domain will turn out useless.   

The motivation to our approach rises from the observation, that we may resolve the aforementioned obstacles by Xampling the beamformed signal, $\Phi\left(t;\theta\right)$, rather than the individual signals $\varphi_m\left(t\right)$.  Whereas beamforming is a fundamental process in ultrasound imaging since its early days, our innovation regards its integration into the Xampling process.  We derive our compressed beamforming approach, beginning with conceptual Xampling of the beamformed signal, using the scheme proposed in \cite{Gedalyahu01}.  We then show that an equivalent result may be obtained from low-rate samples of the individual signals  $\varphi_m\left(t\right)$.   

A necessary condition for implementing our approach is that $\Phi\left(t;\theta\right)$, generated from $\left\{\varphi_m\left(t\right)\right\}_{m=1}^M$ satisfying \eqref{E:01}, is also FRI of similar form.  Examining Fig.~\ref{Fig:02} we notice that $\Phi\left(t;\theta\right)$ exhibits a structure similar to that of the individual signals, comprising strong pulses of typical shape, which may overlap. In this case, there are several obvious advantages in Xampling $\Phi\left(t;\theta\right)$.  First, since $\left\{\varphi_m\left(t\right)\right\}_{m=1}^M$  are averaged in $\Phi\left(t;\theta\right)$ (after appropriate distortion, derived from the acoustic reciprocity theorem) it naturally exhibits enhanced SNR with respect to the individual signals.  The attenuation of noise in the beamformed signal, compared to the individual signals, is apparent in Fig.~\ref{Fig:02}, especially in the interval $50\mbox{mm}-80\mbox{mm}$.    Second, $\Phi\left(t;\theta\right)$ is directly related to an individual scanline. This means that we are no longer bothered with the ambiguous problem of matching pulses across signals detected in different elements.  Finally, recall that the signal model derived in \eqref{E:01} assumes isolated point-reflectors.  Such a model is better justified with respect to $\Phi\left(t;\theta\right)$ since, by narrowing the effective width of the imaging beam, we may indeed approximate its intersection with reflecting structures to be point-like.  This effect is noticeable in Fig.~\ref{Fig:02} where some pulses, visible in individual traces, appear attenuated in the beamformed signal.  Such pulses correspond to reflectors located off the central axis of the transmission beam.  

In the next section, we focus on justifying the assumption that $\Phi\left(t;\theta\right)$ may be treated within the FRI framework.  An additional challenge, implied in Section \ref{sec:02}, regards the fact that $\Phi\left(t;\theta\right)$ does not exist in the analog domain - standard ultrasound devices generate it digitally, from samples of the signals detected in multiple receiver elements, taken at the Nyquist-rate.  Our goal is, therefore, to derive a scheme, which manages to estimate the necessary samples of $\Phi\left(t;\theta\right)$, from low-rate samples of filtered versions of $\left\{\varphi_m\left(t\right)\right\}_{m=1}^M$.


\section{Compressed Beamforming}
\label{sec:05}
Our approach is based on the assumption that the FRI scheme, outlined in Section \ref{sec:03}, may be applied to the beamformed signal $\Phi\left(t;\theta\right)$, constructed according to \eqref{E:04}-\eqref{E:05}.  The latter exhibits much better SNR than signals detected in individual receiver elements.  Additionally, it depicts reflections originating from a sector much narrower than the one radiated by the transmission beam.  Its translation into a single scanline is therefore straightforward.  In Section \ref{sec:51} we prove that if the signals $\varphi_m\left(t\right)$ obey the FRI model \eqref{E:01}, then $\Phi\left(t;\theta\right)$ is {\bf{approximately}} of the form: 
\begin{equation}
\label{E:09}
\Phi\left(t;\theta\right)=\sum_{l=1}^L{b_l h\left(t-t_{l}\right)},
\end{equation}
where $t_l$ denotes the time in which the reflection from the $l$th element arrived at the reference receiver, indexed $m_0$.  $\Phi\left(t;\theta\right)$ may thus be sampled using the Xampling schemes derived in \cite{Tur01,Gedalyahu01}.  In practice, we cannot sample $\Phi\left(t;\theta\right)$ directly, since it does not exist in the analog domain.  In Second \ref{sec:52} we show how the desired low-rate samples of $\Phi\left(t;\theta\right)$ can be determined from samples of $\varphi_m\left(t\right)$.    

\subsection{FRI Modeling of the Beamformed Signal}
\label{sec:51}
Throughout this section we apply three reasonable assumptions.  First, we assume that $2\gamma_m\leq t_l$.  Practically, such a constraint may be forced by appropriate apodization, as often performed in ultrasound imaging.  Namely, $\varphi_m\left(t\right)$ is combined in $\Phi\left(t;\theta\right)$ only for $t\geq 2\gamma_m$.  As an example, for the breadboard ultrasonic scanner used in our experiments, the array comprised $64$ receiver elements, distanced $0.29\mbox{mm}$ apart. The proposed apodization implies that the receivers located farthest from the origin are combined in the beamformed signal for imaging depth greater than $9.1\mbox{mm}$.  Second, we assume the two-way pulse, $h(t)$, to be compactly supported on the interval $\left[0,\Delta\right)$.  Finally, we assume that $\Delta \ll t_l$.  The last assumption may also be forced by appropriate apodization.  As an example, the nominal duration of the pulse acquired by the breadboard ultrasonic scanner used in our experiments was 
 $4\mu \mbox{sec}$.  In this case, echoes scattered from depth greater than $3.1\mbox{cm}$ already satisfy  $t_l>10\Delta$.     

Suppose that $\varphi_m\left(t\right)$ can be written as in \eqref{E:01}.  Applying the beamforming distortion \eqref{E:04}, we get
\begin{equation}
\label{E:10}
\begin{split}
{\hat{\varphi}}_m(t;\theta)=\sum_{l=1}^L{a_{l,m}h\left(\tau_m\left(t;\theta\right)-t_{l,m}\right)}.
\end{split}
\end{equation}
The resulting signal comprises $L$ pulses, which are distorted versions of the two-way pulse $h\left(t\right)$.  Suppose that some of the pulses originated in reflectors located off the central beam axis.  Beamforming implies that, once averaging the distorted signals according to \eqref{E:05}, such pulses will be attenuated due to destructive interference.  Being interested in the structure of the beamformed signal $\Phi\left(t;\theta\right)$, we are therefore concerned only with pulses which originated in reflectors located along the central beam.  For convenience, we assume that all pulses in \eqref{E:10} satisfy this property (pulses which do not satisfy it, will vanish in $\Phi\left(t;\theta\right)$).  We may thus use $\tau_m\left(t;\theta\right)$, defined in \eqref{E:04}, in order to express $t_{l,m}$ in terms of $t_l$.  Substituting $t=t_l$ into $\tau_m\left(t;\theta\right)$, we get $t_{l,m}=\tau_m\left(t_l;\theta\right)$, so that \eqref{E:10} becomes
\begin{equation}
\label{E:12}
\begin{split}
{\hat{\varphi}}_m(t;\theta)&=\sum_{l=1}^L{a_{l,m}\tilde{h}_{l,m}\left(t;\theta\right)},
\end{split}
\end{equation}
where we defined $\tilde{h}_{l,m}\left(t;\theta\right)=h\left(\tau_m\left(t;\theta\right)-\tau_m\left(t_l;\theta\right)\right)$. 

Applying our second assumption, the support of $\tilde{h}_{l,m}\left(t;\theta\right)$ is defined by the requirement that
\begin{align}\label{E:13}
0\leq\tau_m\left(t;\theta\right)-\tau_m\left(t_l;\theta\right)<\Delta.
\end{align}   
Using \eqref{E:13} and \eqref{E:04}, it is readily seen that $\tilde{h}_{l,m}\left(t;\theta\right)$ is supported on $\left[t_l,t_l+\Delta'\right)$, where
 \begin{align}\label{E:14}
\Delta'=2\Delta\frac{\sqrt{t_l^2-4\gamma_m t_l \sin\theta + 4\gamma_m^2}+\Delta}{\sqrt{t_l^2-4\gamma_m t_l \sin\theta + 4\gamma_m^2}+2\Delta+t_l-2\gamma_m\sin\theta}.
\end{align}   

\noindent Further applying our assumption that $2\gamma_m\leq t_l$, we obtain $\Delta'\leq2\Delta$.  

We have thus proven that $\tilde{h}_{l,m}\left(t;\theta\right)=0$ for $t\notin\left[t_l, t_l+2\Delta\right)$.  Next, let us write any $t$ in $\left[t_l, t_l+2\Delta\right)$ as $t=t_l+\eta$, where $0\leq \eta < 2\Delta$. Then 
\begin{align}\label{E:16}
\begin{split}
\tilde{h}_{l,m}\left(t;\theta\right)&=h\left(\tau_m\left({t_l+\eta};\theta\right)-\tau_m\left({t_l};\theta\right)\right).
\end{split}
\end{align}

\noindent We now rely on our assumption that $\Delta \ll t_l$.  Since $\eta<2\Delta$, we also have $\eta\ll t_l$.  The argument of $h\left(\cdot\right)$ in \eqref{E:16} may therefore be approximated, to first order, as  
\begin{align}\label{E:17}
&\tau_m\left({t_l+\eta};\theta\right)-\tau_m\left({t_l};\theta\right)=\sigma_{m,l}\left(\theta\right)\eta+o\left(\eta^2\right),   
\end{align}
where 
\begin{align}\label{E:15}
\sigma_{m,l}\left(\theta\right)= \frac{1}{2}\left(1+\frac{t_l-2\gamma_m\sin\theta}{\sqrt{t_l^2-4\gamma_mt_l\sin\theta+4\gamma_m^2}}\right).
\end{align}

\noindent Up until this point, we assumed that $2\gamma_m\leq t_l$.  Further assuming that $\gamma_m\ll t_l$, $\sigma_{m,l}\left(\theta\right)\rightarrow 1$.  Replacing $\eta$ by $\eta=t-t_l$, \eqref{E:16} may therefore be written as
\begin{align}\label{E:19}
\begin{split}
\begin{array}{ll}\tilde{h}_{l,m}\left(t;\theta\right)\approx h\left(t-t_l\right)&t\in\left[t_l,t_l+2\Delta\right).\end{array}
\end{split}
\end{align}

Combining \eqref{E:19} with the fact that $h\left(t-t_l\right)$ is zero outside $\left[t_l,t_l+2\Delta\right)$, \eqref{E:12} may be approximated as
 \begin{align}\label{E:20}
{\hat{\varphi}}_m(t;\theta) \approx \sum_{l=1}^L{a_{l,m}h\left(t-t_l\right)}.
\end{align} 
Averaging the signals $\left\{\hat{\varphi}_m(t;\theta)\right\}_{m=1}^M$ according to \eqref{E:05}, we get:
\begin{equation}
\label{E:21}
\begin{split}
\Phi\left(t;\theta\right)&\approx\sum_{l=1}^L{\left(\frac{1}{M}\sum_{m=1}^M{a_{l,m}}\right)h\left(t-t_l\right)}=\sum_{l=1}^L{b_l h\left(t-t_l\right)},
\end{split}
\end{equation}   
which is indeed the FRI form \eqref{E:09}.  Additionally, assuming that the support of $\varphi_m\left(t\right)$ is contained in $\left[0,T\right)$, we show in the Appendix that there exists $T_B\left(\theta\right)\leq T$, such that the support of $\Phi\left(t;\theta\right)$ is contained in $\left[0,T_B\left(\theta\right)\right)$ and, additionally, $\tau_m\left({T_B\left(\theta\right)};\theta\right)\leq T$.  

As $\gamma_m$ grows towards $t_l$, $\sigma_{m,l}\left(\theta\right)$ decreases, resulting in a larger distortion of the $l$th pulse.  Consequently, the approximation of ${\hat{\varphi}}_m(t;\theta)$ as a sum of shifted replicas of the two-way pulse becomes less accurate.  The Xampling schemes used by \cite{Tur01,Gedalyahu01} rely on the projection of the detected signal onto a subspace of its Fourier series coefficients.  We therefore examine the dependency of the projection error on the distortion parameters, $\gamma_m$, $t_l$ and $\theta$.  In Fig. \ref{Fig:03}, we show projection errors calculated numerically, for a signal comprising a single pulse of duration $\Delta=2\mu\mbox{sec}$.  The pulse was simulated by modulating a Gaussian envelope with carrier frequency $3\mbox{MHz}$.  It was then shifted by multiple time delays, $t_l$, where $0\leq t_l \leq T$, and $T=210\mu\mbox{sec}$, corresponding to an imaging depth of $16\mbox{cm}$.  For each delay, we generated the signals $\varphi_m\left(t\right)$, assuming that the reflector is positioned along the $\hat{z}$ axis ($\theta=0$), and that the receiver elements are distributed $0.29\mbox{mm}$ apart, along the $\hat{x}$ axis.  We chose $M=63$, such that the center (reference) receiver was indexed $m_0=32$.  The beamforming distortion was then applied to the simulated signals, based on \eqref{E:04}.  Finally, the distorted signals were projected onto a subset of $K=121$ consecutive Fourier series coefficients, taken within the essential spectrum of the two-way pulse.  The coefficients extracted from the $m$th distorted signal were arranged into the length $K$ vector, ${\bf{\Phi_m}}$.  As implied by \eqref{E:04}, no distortion is applied to the signal detected at the reference receiver.  We therefore evaluate the projection error by calculating the SNR defined as $20\log_{10}{\frac{\|{\bf{\Phi_{m_0}}}\|_2}{\|{\bf{\Phi_m}}-{\bf{\Phi_{m_0}}}\|_2}}$.  

The traces obtained for several values of $1\leq m<32$ are depicted in the figure.  As $t_l$ grows, $\sigma_{m,l}\left(\theta\right)$ approaches 1, and the approximation \eqref{E:20} becomes more valid.  As a result, the projection error decreases.  For receivers located near the origin, such that $\delta_m\ll10\mbox{mm}$, the error decreases very quickly.  For instance, examining $\delta_{31}=0.29\mbox{mm}$, the SNR grows above $25\mbox{dB}$ for a reflection originating at distance greater than $1/50$ of the imaging depth.  The SNR improves more moderately for receivers located farther away from the origin.  Nevertheless, considering the receiver located farthest away from the origin, $\delta_{1}=8.99\mbox{mm}$, the SNR grows above $10\mbox{dB}$ for a reflection originating at distance greater than $1/5$ of the imaging depth.   

\begin{figure}
\centerline{\includegraphics[width=7.50cm]{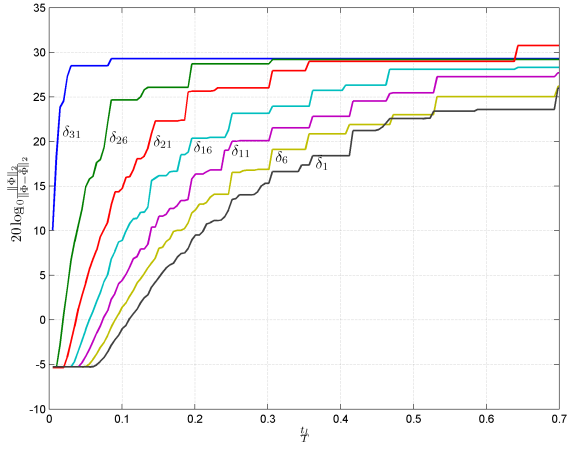}}
\caption{Projection error caused by beamforming distortion with $\theta=0$ vs. pulse delay, $t_l$, for several receiver elements.  The elements are distributed $0.29\mbox{mm}$ apart, such that $\delta_1=8.99\mbox{mm}$ (element farthest from array center) and  $\delta_{31}=0.29\mbox{mm}$.  Zero error is obtained for the center element, $\delta_{32}$, since no distortion is required in this case.}
\label{Fig:03}
\end{figure}


Concluding this section, our empirical results indeed justify the approximation proposed in \eqref{E:09}, where appropriate apodization may further improve this approximation.  Assuming \eqref{E:09} to be valid, we may reconstruct the beamformed signal using the Xampling schemes proposed in \cite{Tur01,Gedalyahu01}.  

\subsection{Compressed Beamforming with Distorted Analog Kernels}
\label{sec:52}
An obvious problem is that $\Phi\left(t;\theta\right)$ does not exist in the analog domain, and therefore may not be Xampled directly. We now propose a modified Xampling scheme, which allows extraction of its necessary low-rate samples, by sampling filtered versions of $\varphi_m\left(t\right)$ at sub-Nyquist rates.    

Since the support of $\Phi\left(t;\theta\right)$ is contained in $\left[0,T_B\left(\theta\right)\right)$, where $T_B\left(\theta\right)\leq T$, we may define $\Phi\left(t;\theta\right)$'s Fourier series with respect to the interval $\left[0,T\right)$.  Denoting by $c_j$ the $k_j$th Fourier series coefficient of $\Phi\left(t;\theta \right)$, we have 

\begin{align}
\begin{split}
\label{E:23}
c_j&=\frac{1}{T}\int_0^{T}{I_{\left[0,T_B\left(\theta\right)\right)}\left(t\right)\Phi\left(t;\theta \right)e^{-i\frac{2\pi}{T}k_j t}dt}, 
\end{split}
\end{align}
where $I_{\left[a,b\right)}\left(t\right)$ is the indicator function, taking the value 1 for $a\leq t<b$ and 0 otherwise.  Plugging the indicator function in \eqref{E:23} may seem unnecessary.  However, once transforming \eqref{E:23} into an operator applied directly to $\left\{\varphi_m\left(t\right)\right\}_{m=1}^M$, it serves an important role in zeroing intervals, which are assumed zero according to \eqref{E:01}, but, in any practical implementation, contain noise.  Substituting \eqref{E:05} into \eqref{E:23}, we can write      
\begin{align}
\begin{split}
\label{E:24}
c_j=\frac{1}{M}\sum_{m=1}^{M}{c_{j,m}}, 
\end{split}
\end{align}
where, from \eqref{E:04},
\begin{align}
\begin{split}
\label{E:25}
c_{j,m}&=\frac{1}{T}\int_0^{T}{I_{\left[0,T_B\left(\theta\right)\right)}\left(t\right) \varphi_m\left(\tau_m\left({t};\theta\right)\right) e^{-i\frac{2\pi}{T}k_j t}dt}\\ 
&=\frac{1}{T}{\int_{0}^{T}g_{j,m}(t;\theta)\varphi_m\left(t\right)dt},
\end{split}
\end{align}
and
\begin{align}\label{E:27}    
\begin{split}
g_{j,m}(t;\theta)=&q_{j,m}(t;\theta)e^{-i\frac{2\pi}{T}k_j t},\\ 
q_{j,m}(t;\theta)=&I_{\left[|\gamma_m|,T_m\left(\theta\right)\right)}\left(t\right)\left(1+\frac{\gamma_m^2  \cos^2\theta}{\left(t-\gamma_m \sin\theta\right)^2}\right) \times \\
&\exp\left\{i \frac{2\pi}{T} k_j\frac{\gamma_m-t\sin\theta}{t-\gamma_m\sin\theta}\gamma_m \right\},\\
T_m\left(\theta \right)=&\tau_m\left({T_B\left(\theta\right)};\theta\right).
\end{split}
\end{align}
 
The process defined in \eqref{E:24}-\eqref{E:27} can be translated into a multi-channel Xampling scheme, such as the one depicted in Fig. \ref{Fig:04}.  Each signal $\varphi_m\left(t\right)$ is multiplied by a bank of kernels $\left\{g_{j,m}\left(t;\theta\right)\right\}_{j=1}^K$ defined by \eqref{E:27}, and integrated over $\left[0, T\right)$.  This results in a vector ${\bf{c_m}}=\left[\begin{array}{cccc}c_{1,m}&c_{2,m}&...&c_{K,m}\end{array}\right]^T$.  The vectors $\left\{{\bf{c_m}}\right\}_{m=1}^M$ are then averaged in ${\bf{c}}=\left[\begin{array}{cccc}c_{1}&c_{2}&...&c_{K}\end{array}\right]^T$, which has the desired improved SNR property, and provides a basis for extracting the $2L$ parameters which define $\Phi\left(t;\theta\right)$.  Since $\Phi\left(t;\theta\right)$ satisfies \eqref{E:09}, we apply a similar derivation to that outlined in Section~\ref{sec:04}, yielding 
\begin{align}\label{E:47}    
\begin{split}
{\bf{c}}=\frac{1}{T}{\bf{HVb}},
\end{split}
\end{align}
where ${\bf{H}}$ is a diagonal matrix with $j$th diagonal element $H\left(\frac{2\pi}{T}k_{j}\right)$, ${\bf{V}}$ contains $e^{-i\frac{2\pi}{T}k_j t_{l}}$ as its $\left(j,l\right)$th element, and ${\bf{b}}$ is the length $L$ vector, with elements $b_{l}$.  The matrix ${\bf{V}}$ may be estimated by applying spectral analysis techniques, allowing for the vector of coefficients ${\bf{b}}$ to be solved by a least squares approach~\cite{Tapan01}.  Fig. \ref{Fig:11} illustrates the shape of the resulting kernels $g_{j,m}\left(t;\theta\right)$, setting $\theta=0$ and choosing two arbitrary values of $k_j$.  For each choice of $k_j$ we plot the kernels corresponding to 7 receiver elements, selected from an array comprising 64 elements, distanced $0.49\mbox{mm}$ apart.        

\begin{figure}
\begin{minipage}[b]{1.0\linewidth}
\centering
\centerline{\includegraphics[width=8.5cm]{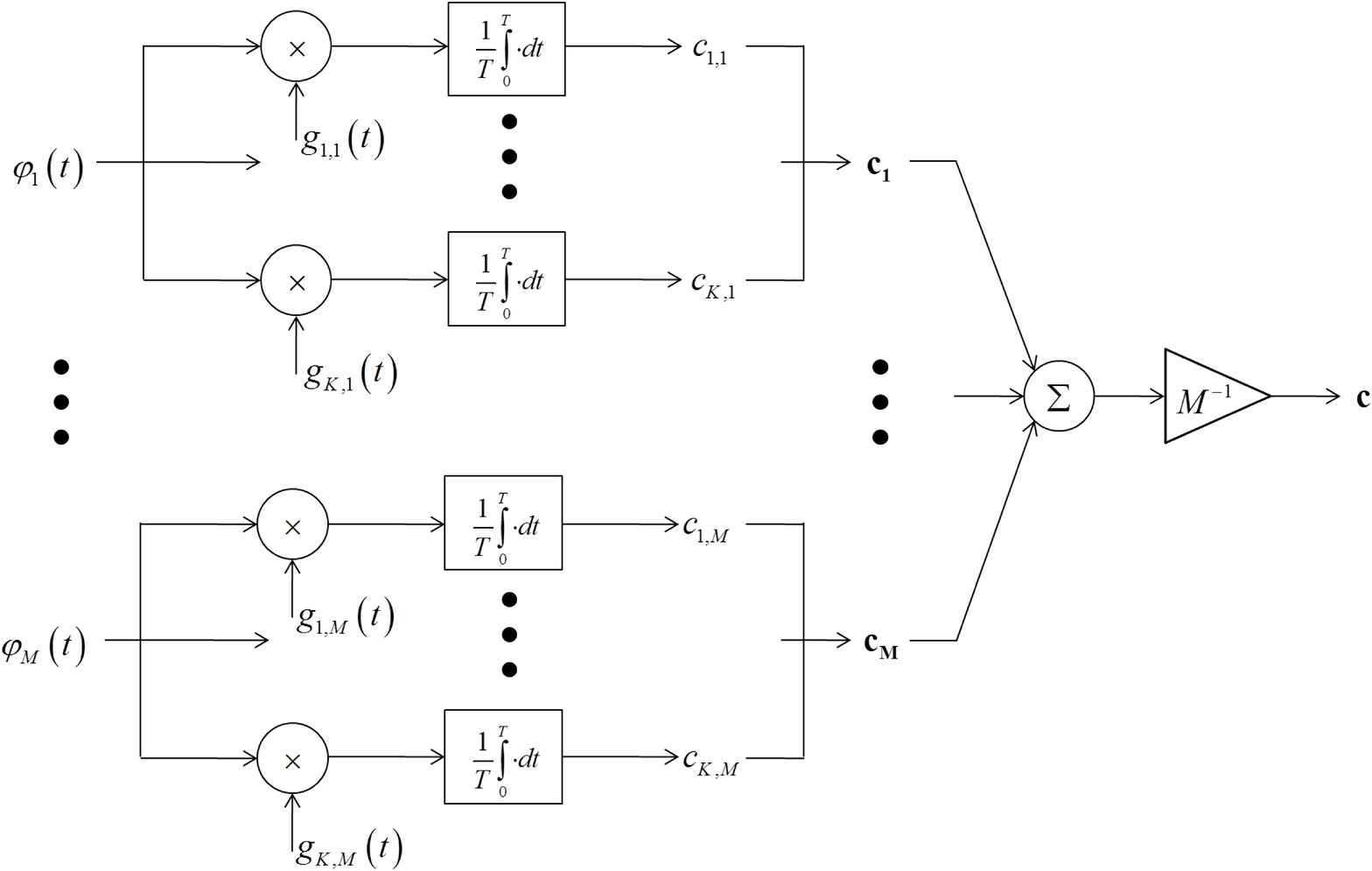}}
\end{minipage}
\caption{Xampling scheme utilizing distorted exponential kernels.}
\label{Fig:04}
\end{figure}     

\begin{figure}
\begin{minipage}[b]{1.0\linewidth}
\centering
\centerline{\includegraphics[width=7cm]{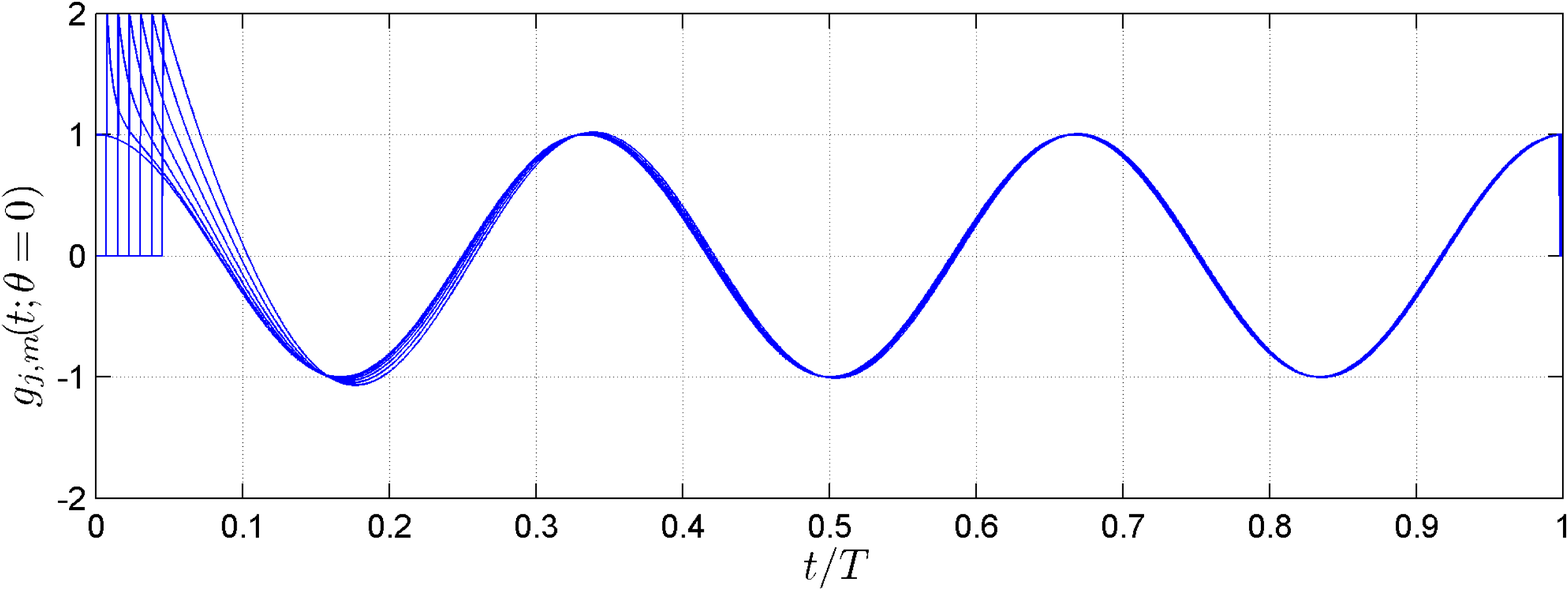}}
\centerline{(a)}
\end{minipage}
\begin{minipage}[b]{1.0\linewidth}
\centering
\centerline{\includegraphics[width=7cm]{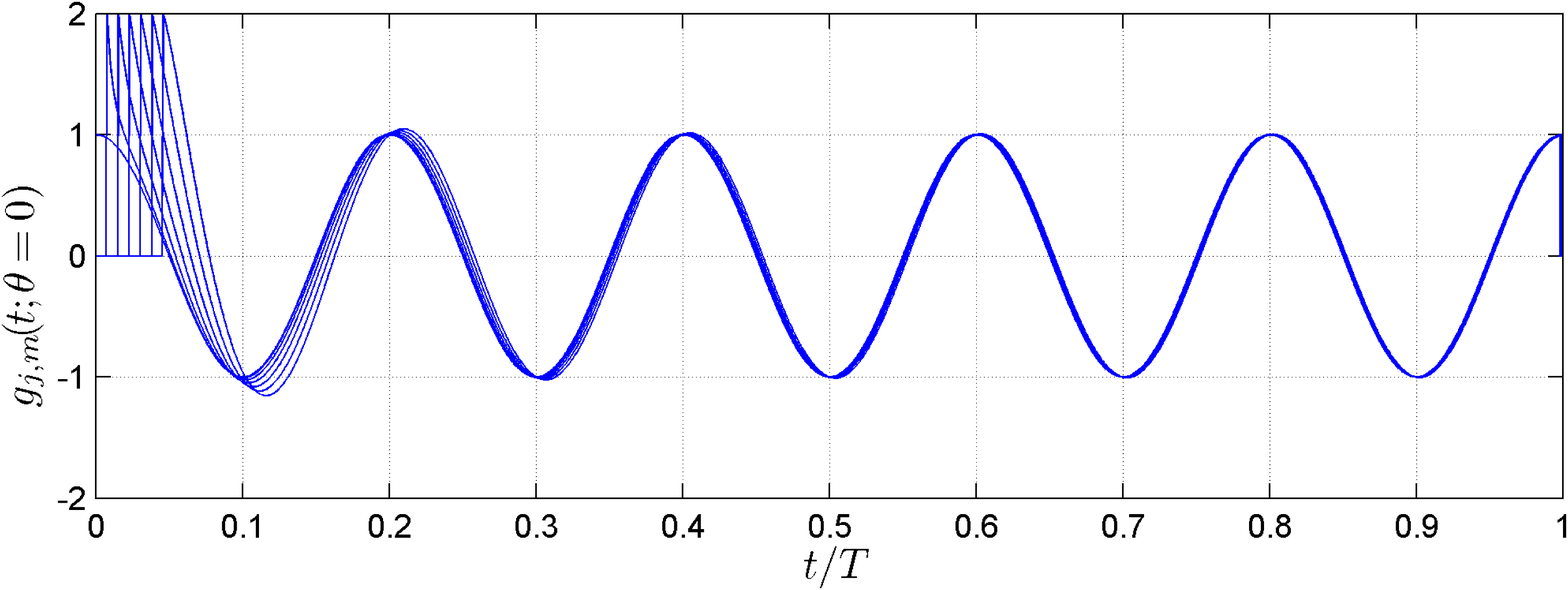}}
\centerline{(b)}
\end{minipage}
\caption{Real part of $g_{j,m}(t;\theta=0)$ for $T=210\mu\mbox{sec}$ and $k_j$ satisfying: (a) $k_j=3$, (b) $k_j$=5.  We assume an array comprising $M=64$ elements, distanced $0.49\mbox{mm}$ apart, and plot 7 traces which were obtained for the elements indexed $\left\{m_0, m_0+5, m_0+10, ..., m_0+30\right\}$.}
\label{Fig:11}
\end{figure}      
\section{Simplified Xampling Mechanism}
\label{sec:06}
In the previous section, we developed a Xampling approach to extract the Fourier series coefficients of $\Phi\left(t;\theta\right)$. However, the complexity of the resulting analog kernels, together with their dependency on $\theta$, makes hardware implementation of the scheme depicted in Fig. \ref{Fig:04} complex.  Here, we take an additional step, which allows the approximation of $\left\{c_{j,m}\right\}_{j=1}^K$, and consequently $\left\{c_{j}\right\}_{j=1}^K$, from low-rate samples of $\varphi_m\left(t\right)$, obtained in a much more straightforward manner.

We begin by substituting $\varphi_m\left(t\right)$ of \eqref{E:25} by its Fourier series, calculated with respect to $\left[0,T\right)$.  Denoting the $n$th Fourier coefficient by $\phi_m\left[n\right]$, we get: 
\begin{align}\label{E:28}    
\begin{split}
c_{j,m}&=\sum_{n}{\phi_m\left[n\right]\frac{1}{T}\int_0^T{q_{j,m}(t;\theta)e^{-i\frac{2\pi}{T}\left(k_j-n\right) t}dt}}\\
&=\sum_{n}{\phi_m\left[k_j-n\right]Q_{j,m;\theta}\left[n\right]},
\end{split}
\end{align}
where $Q_{j,m;\theta}\left[n\right]$ are the Fourier series coefficients of $q_{j,m}(t;\theta)$, also defined on $\left[0, T\right)$. Let us replace the infinite summation of \eqref{E:28} by its finite approximation:
\begin{align}\label{E:29}
\begin{split}
{\hat{c}}_{j,m}=\sum_{n=N_1}^{N_2}{\phi_m[k_j-n]Q_{j,m;\theta}\left[n\right]}.
\end{split}
\end{align} 
The following proposition shows that this approximation can be made sufficiently tight. 
\begin{mydef}
Assume that $\int_{-\infty}^{\infty}{\left|\varphi_m\left(t\right)\right|^2dt}<\infty$.  Then, for any $\epsilon>0$, and for any selection $\left(j,m;\theta\right)$, there exist finite $N_1\left(\epsilon, k_j, m; \theta \right)$ and $N_2\left(\epsilon, k_j, m;\theta \right)$ such that $|{c_{j,m}-\hat{c}}_{j,m}|^2<\epsilon$.
\end{mydef}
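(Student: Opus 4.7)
The plan is to reduce the claim to a tail estimate on a Cauchy--Schwarz bound for the series in \eqref{E:28}. The key point is that both sequences appearing in that series, $\{\phi_m[n]\}$ and $\{Q_{j,m;\theta}[n]\}$, are square summable, so the tail outside $[N_1,N_2]$ can be forced below any $\epsilon>0$ by choosing $N_1,N_2$ large enough in absolute value.

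First I would unpack the two $\ell^2$ facts. The hypothesis $\int_{-\infty}^{\infty}|\varphi_m(t)|^2\,dt<\infty$ in particular implies $\varphi_m\in L^2([0,T))$, so Parseval's identity applied to the Fourier series on $[0,T)$ yields $\sum_n|\phi_m[n]|^2=\tfrac{1}{T}\int_0^T|\varphi_m(t)|^2\,dt<\infty$. Next I would show $q_{j,m}(\cdot;\theta)\in L^2([0,T))$: the indicator $I_{[|\gamma_m|,T_m(\theta))}$ confines the support to a compact subinterval of $[0,T)$, the complex exponential has modulus one, and on this subinterval the rational factor $1+\gamma_m^2\cos^2\theta/(t-\gamma_m\sin\theta)^2$ is bounded because $t-\gamma_m\sin\theta$ is bounded away from zero for $t\ge|\gamma_m|$ whenever $\sin\theta\ne\pm1$ (and one may argue the degenerate directions away using apodization, or simply exclude them from $\theta$). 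Consequently Parseval again gives $\sum_n|Q_{j,m;\theta}[n]|^2<\infty$.

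Then I would estimate the tail. Writing
\begin{equation*}
c_{j,m}-\hat{c}_{j,m}=\sum_{n\notin[N_1,N_2]}\phi_m[k_j-n]\,Q_{j,m;\theta}[n],
\end{equation*}
Cauchy--Schwarz gives
\begin{equation*}
|c_{j,m}-\hat{c}_{j,m}|^2\le\Biggl(\sum_{n\notin[N_1,N_2]}|\phi_m[k_j-n]|^2\Biggr)\Biggl(\sum_{n\notin[N_1,N_2]}|Q_{j,m;\theta}[n]|^2\Biggr).
\end{equation*}
Bound the first factor by the full $\ell^2$ norm $\|\phi_m\|_{\ell^2}^2$, which is finite by Step~1. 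The second factor is the tail of a convergent nonnegative series, so it tends to zero as $N_1\to-\infty$ and $N_2\to\infty$; hence for any prescribed $\epsilon>0$ we can pick finite $N_1(\epsilon,k_j,m;\theta)$ and $N_2(\epsilon,k_j,m;\theta)$ making this tail strictly less than $\epsilon/\|\phi_m\|_{\ell^2}^2$, which closes the argument.

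The only technical friction I anticipate is the verification in Step~1 that $q_{j,m}(\cdot;\theta)$ lies in $L^2([0,T))$, specifically ruling out a non-integrable singularity of $(t-\gamma_m\sin\theta)^{-2}$ at the left endpoint of the support when $|\sin\theta|$ is close to $1$. For the geometries relevant to the paper (apodization ensures $2\gamma_m\le t_l$ and beam angles bounded away from $\pm\pi/2$) this is immediate, but it is the one step that needs care; the rest is a textbook Parseval plus Cauchy--Schwarz tail argument.
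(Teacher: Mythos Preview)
Your proposal is correct and follows essentially the same route as the paper's proof: both arguments establish that the shifted Fourier-coefficient sequence of $\varphi_m$ and the Fourier-coefficient sequence of $q_{j,m}(\cdot;\theta)$ lie in $\ell^2$ (via finite energy and Parseval), then bound the truncation error by Cauchy--Schwarz so that the tail of $\sum_n|Q_{j,m;\theta}[n]|^2$ controls the discrepancy. The paper is terser about the $L^2$ membership of $q_{j,m}$ (it simply asserts $\|\mathbf{b}\|_2\approx T_m(\theta)/T$) and additionally packages the tail as $\|\mathbf{b}\|_2^2(1-\rho^2)$ with $\rho^2=\|\mathbf{b_t}\|_2^2/\|\mathbf{b}\|_2^2$, a parameter it later reuses in the experiments; your explicit boundedness check for the rational factor on $[|\gamma_m|,T_m(\theta))$ is a welcome addition, and the endpoint concern you flag is in fact harmless since $t-\gamma_m\sin\theta\ge|\gamma_m|(1-|\sin\theta|)>0$ for $|\sin\theta|<1$, while at $|\sin\theta|=1$ the numerator $\gamma_m^2\cos^2\theta$ vanishes.
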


\begin{proof}
\noindent Let $l_2$ be the space of square-summable sequences, with norm $\|{\bf{x}}\|_2^2=\sum_n{|x_n|^2}$.  Let ${\bf{a}}=\left\{\phi_m\left[k_j-n\right]\right\}_{n=-\infty}^{\infty}$ and ${\bf{b}}=\left\{Q_{j,m;\theta}^*\left[n\right]\right\}_{n=-\infty}^{\infty}$.   Since $\varphi_m\left(t\right)$ is of finite energy, ${\bf{a}}\in l_2$.  We may calculate the $l_2$ norm of ${\bf{b}}$, based on the definition of $q_{j,m}\left(t;\theta\right)$ in \eqref{E:27}, resulting in $\|{\bf{b}}\|_2\approx T_m\left(\theta\right)/T<\infty$.  This implies that ${\bf{b}}\in l_2$ as well.  Let ${\bf{b_t}}$ be the truncated sequence ${\bf{b}}$ for $N_1\leq n\leq N_2$ and zero otherwise.  We may then write the approximation error as:
\begin{align}\label{E:30}
\begin{split}
|c_{j,m}-{\hat{c}}_{j,m}|^2=|\left<{\bf{a}},{\bf{b}-{\bf{b_t}}}\right>|^2\leq \|{\bf{a}}\|_{2}^2\|{\bf{b}}-{\bf{b_t}}\|_{2}^2,
\end{split}
\end{align}  
where $\left<\cdot,\cdot\right>$ is the inner product defined as $\left<{\bf{x}},{\bf{y}}\right>=\sum_n{x_ny_n^*}$.  The last transition in \eqref{E:30} is a result of Cauchy-Schwartz inequality.  By definition of ${\bf{b_t}}$ and ${\bf{b}}$, it is readily seen that $\|{\bf{b}}-{\bf{b_t}}\|_{2}^2={\|{\bf{b}}\|_{2}^2-\|{\bf{b_t}}\|_{2}^2}$.  Denoting $\rho^2=\|{\bf{b_t}}\|_2^2/\|{\bf{b}}\|_2^2$, \eqref{E:30} becomes 
\begin{align}\label{E:31}
\begin{split}
|c_{j,m}-{\hat{c}}_{j,m}|^2\leq \|{\bf{a}}\|_{2}^2\|{\bf{b}}\|_{2}^2\left(1-\rho^2\right).
\end{split}
\end{align}  

Since $\|{\bf{b}}\|_2<\infty$, $\rho^2$ can approach $1$ as close as we desire, by appropriate selection of $N_1$ and $N_2$.  For any $\epsilon>0$, there exists $\rho^2\left(\epsilon\right)< 1$, such that the right side of \eqref{E:31} is smaller than $\epsilon$.  Selecting $N_1$ and $N_2$ for which $\|{\bf{b_t}}\|_2^2/\|{\bf{b}}\|_2^2\geq \rho^2\left(\epsilon\right)$, results in $|c_{j,m}-{\hat{c}}_{j,m}|^2<\epsilon$, as required. Furthermore, setting an upper bound on the energy of $\varphi_m\left(t\right)$, and thereby on $\|{\bf{a}}\|_2^2$, $N_1$ and $N_2$ may be chosen off-line, subject to the decay properties of the sequence $\left\{Q_{j,m;\theta}\left[n\right]\right\}_{n=-\infty}^{\infty}$.
\end{proof}  

\noindent Using Proposition 1, we can compute ${\hat{c}}_{j,m}$ as a good approximation to $c_{j,m}$.  We now show how $\hat{c}_{j,m}$ can be obtained directly from the Fourier series coefficients $\phi_m\left[n\right]$ of each $\varphi_m\left(t\right)$.   

We first evaluate $N_1$ and $N_2$ for a certain choice of $m$ and $\theta$, such that $c_{j,m}$ may be approximated to the desired accuracy using \eqref{E:29}.  Equivalently, we obtain the minimal subset of $\varphi_m\left(t\right)$'s Fourier series coefficients, required for the approximation of $c_{j,m}$.  Performing this for all $1\leq j\leq K$, we obtain $K$ such subsets.  Denoting the union of these subsets by $\kappa_m$, we may now simultaneously compute  $\left\{\hat{c}_{j,m}\right\}_{j=1}^K$ from $\left\{\phi_m\left[n\right]\right\}_{n\in\kappa_m}$ by a linear transformation. Define the length-$K_m$ vector ${\bf{\Phi_m}}$, with $l$th element
 $\phi_m\left[k_l\right]$, and $k_l$ being the $l$th element in $\kappa_m$.  Using \eqref{E:29}, we may write
\begin{align}\label{E:32}
\begin{split}
{\bf{{\hat{c}}_m}}={\bf{A_{m}}}\left(\theta\right){\bf{\Phi_m}},
\end{split}
\end{align}  
where ${\bf{\hat{c}_m}}$ is the length-$K$ vector with $j$th element $\hat{c}_{j,m}$, and ${\bf{A_{m}}}\left(\theta\right)$ is a $K\times K_m$ matrix with elements
\begin{align}\label{E:33}
\begin{split}
a_{j,l}=\left\{\begin{array}{ll} Q_{j,m;\theta}\left[k_j-k_l\right] & N_1\left(k_j\right)  \leq k_j-k_l \leq N_2\left(k_j\right) \\ 0 & \mbox{otherwise} \end{array}\right. 
\end{split}.
\end{align}  
Notice, that we have omitted the dependency of $N_1$ and $N_2$ on $\epsilon$, $m$ and $\theta$, since, unlike $k_j$, these remain constant throughout the construction of ${\bf{A_{m}}}\left(\theta\right)$.  

The resulting Xampling scheme is depicted in Fig. \ref{Fig:05}.   Based on~\cite{Tur01}, we propose a simple mechanism for obtaining the Fourier coefficients in each individual element:  a linear transformation, ${\bf{W_m}}$, is applied to point-wise samples of the signal, taken at a sub-Nyquist rate, after filtering it with an appropriate kernel, $s_m^*\left(-t\right)$, such as the Sum of Sincs.  In this scheme, while we do need to extract larger number of samples at the output of each element, as $K_m>K$, we avoid the use of complicated analog kernels as in Section \ref{sec:52}.  Furthermore, as we show in Section \ref{sec:09}, in an actual imaging scenario good approximation is obtained with just a small sampling overhead.  
\begin{figure}
\begin{minipage}[b]{1.00\linewidth}
\centering
 \centerline{\includegraphics[width=9.2cm]{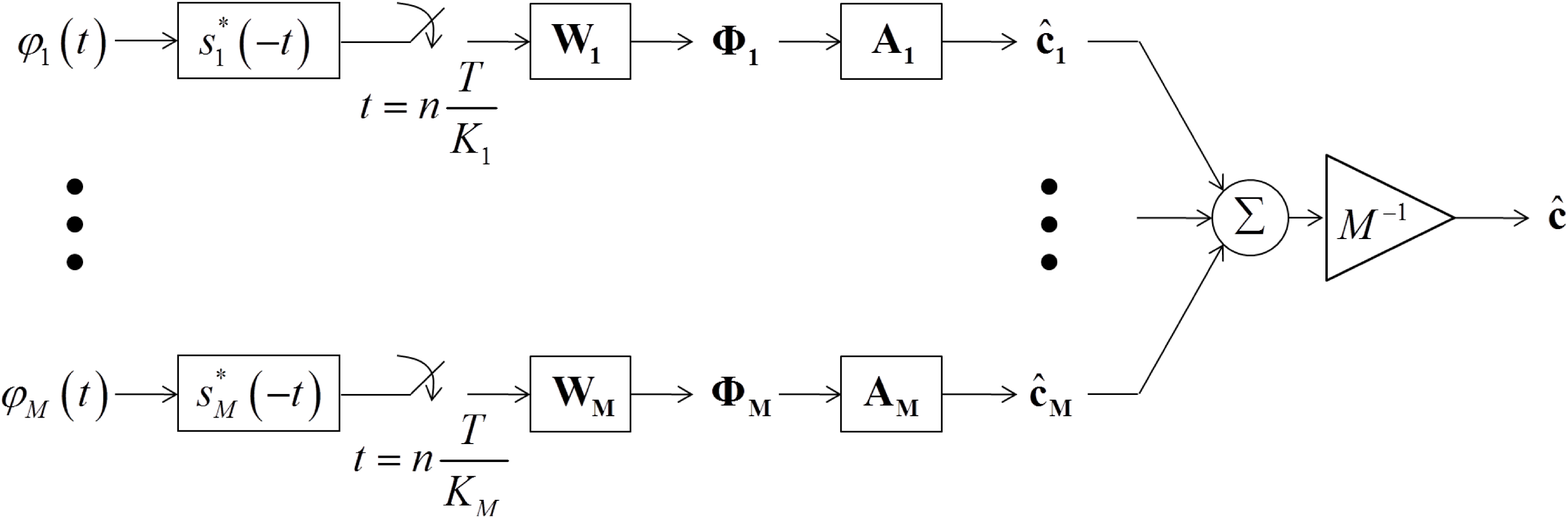}}
\end{minipage}
\caption{Xampling scheme utilizing Fourier samples of $\varphi_m\left(t\right)$.}
\label{Fig:05}
\end{figure}
\section{Signal Reconstruction} 
\label{sec:07}
So far we derived our approach for extracting the parameters $\left\{t_l,b_l\right\}_{l=1}^L$ which determine $\Phi\left(t;\theta\right)$ from sub-Nyquist samples, taken at the individual receiver elements.  In this section we focus on the  reconstruction of $\Phi\left(t;\theta\right)$ from these parameters.  Once  $\Phi\left(t;\theta\right)$ is constructed for multiple values of $\theta$, a two-dimensional image may be formed, by applying standard post-processing techniques: first, $\Phi\left(t;\theta\right)$'s envelope is extracted using the Hilbert transform~\cite{Shiavi01};  logarithmic compression is then applied to each envelope, resulting in a corresponding scanline;  finally, all scanlines are interpolated onto a two-dimensional grid.  Having obtained the parametric representation of $\Phi\left(t;\theta\right)$, the first two steps may be calculated only within the support of the recovered signal.  

In Section \ref{sec:71} we describe the reconstruction of $\Phi\left(t;\theta\right)$ from its estimated parameters, while generalizing the model proposed in \eqref{E:09}: we assume that the detected signals are additionally parametrized by unknown carrier phases of the reflected pulses,  and show that the Xampling approach allows estimation of these unknown phases.  

In Section \ref{sec:72} we propose an alternative approach for reconstructing $\Phi\left(t;\theta\right)$, using CS methodology.   

\subsection{Signal Reconstruction Assuming Unknown Carrier Phase}
\label{sec:71}
Consider the signal defined in \eqref{E:01}.  Modeling a signal of physical nature, it is obviously real-valued, implying that $a_{l,m}$ are real.  Consequently, by \eqref{E:21}, $b_l$ must also be real-valued. However, when we apply spectral analysis techniques aimed at solving the system formulated in \eqref{E:47}, there is generally no constraint that ${\bf{b}}$ be real-valued.  Indeed, solving it for samples obtained using our proposed Xampling schemes, the resulting coefficients are complex, with what appears to be random phases.  In fact, a similar phenomenon is observed when solving \eqref{E:08} for samples taken from the individual signals, $\varphi_m\left(t\right)$, as proposed in \cite{Tur01}.  Below we offer a physical interpretation of the random phases, by generalizing the model proposed in \eqref{E:09}.  The result is a closed-form solution for reconstructing the estimated signal, using the complex coefficients.  When applied, a significant improvement is observed, comparing the envelope of the reconstructed signal, with that of the original signal.  

The ultrasonic pulse $h\left(t\right)$ may be modeled by a baseband waveform, $g\left(t\right)$, modulated by a carrier at frequency $f_0$: $h\left(t\right)=g\left(t\right)\cos\left(\omega_0 t+\beta\right)$, where $\omega_0=2\pi f_0$ and $\beta$ is the phase of the carrier.  The model proposed in \eqref{E:09}, just like the one in \eqref{E:01}, assumes the detected pulses to be exact replicas of $h\left(t\right)$.  However, a more accurate assumption is that each reflected pulse undergoes a phase shift, based upon the relative complex impedances involved in its reflection~\cite{Brown01}.  We thus propose to approximate the beamformed signal as: 
\begin{align}\label{E:48}    
\begin{split}
\Phi\left(t;\theta \right)=\sum_{l=1}^L{\left|b_l\right| g\left(t-t_l\right)\cos\left(\omega_0\left(t-t_l\right)+\beta_l\right)},
\end{split}
\end{align}     
$\beta_l$ being an unknown phase.  The $j$th Fourier series coefficient of $\Phi\left(t;\theta\right)$ is now given by  
\begin{align}
\label{E:49}
\begin{split}
c_j&=\frac{1}{T}\int_{0}^{T}{\sum_{l=1}^L{\left|b_l\right| g\left(t-t_l\right)\cos\left(\omega_0\left(t-t_l\right)+\beta_l\right)}e^{-i\frac{2\pi}{T}k_jt}dt}\\
=& \frac{1}{2T}\sum_{l=1}^L{|b_l|\left(e^{i\beta_l}G\left(\omega_j-\omega_0\right)+e^{-i\beta_l}G\left(\omega_j+\omega_0\right)\right)e^{-i\omega_jt_l}},
\end{split}
\end{align} 
where $G\left(\omega\right)$ is the CTFT of $g\left(t\right)$ and $\omega_j=\frac{2\pi}{T}k_j$.  

Let $g\left(t\right)$  be approximated as a Gaussian with variance $\sigma^2$ and assume that  $k_j\geq 0$.  It is readily seen that 
\begin{align}
\label{E:50}
\begin{split}
\left|\frac{G\left(\omega_j+\omega_0\right)}{G\left(\omega_j-\omega_0\right)}\right|=e^{-2\sigma^2\omega_j\omega_0} 
\end{split}.
\end{align}
We can then choose
\begin{align}
\label{E:51}
\begin{split}
k_j\geq \frac{5T}{4\pi\sigma^2\omega_0} 
\end{split},
\end{align}
so that
\begin{align}
\label{E:52}
\begin{split}
\left|\frac{G\left(\omega_j+\omega_0\right)}{G\left(\omega_j-\omega_0\right)}\right|<10^{-2}
\end{split}.
\end{align}
This allows \eqref{E:49} to be approximated as
\begin{align}
\label{E:53}
\begin{split}
c_j\approx& \frac{1}{2T}G\left(\omega_j-\omega_0\right)\sum_{l=1}^L{|b_l|e^{i\beta_l}e^{-i\frac{2\pi}{T}k_jt_l}},
\end{split}
\end{align}
and additionally 
\begin{align}
\label{E:54}
\begin{split}
H\left(\omega_j\right)\approx \frac{1}{2} e^{i\beta}G\left(\omega_j-\omega_0\right).
\end{split}
\end{align} 
Combining \eqref{E:53} and \eqref{E:54}, we get
\begin{align}
\label{E:55}
\begin{split}
c_j&\approx \frac{1}{T}H\left(\frac{2\pi}{T}k_j\right)\sum_{l=1}^L{b_l e^{-i\frac{2\pi}{T}k_jt_l}},
\end{split}
\end{align}
where we define $b_l=|b_l|e^{i\left(\beta_l-\beta\right)}$. 

Denoting by ${\bf{c}}$ the length $K$ vector, with $c_j$ as its $j$th element, the last result may be brought into the exact same matrix form written in \eqref{E:47}.  However, now we expect the solution to extract complex coefficients, of which phases correspond to the unknown phase shifts of the reflected pulses, $\angle b_l = \beta_l-\beta$.  Having obtained the complex coefficients, we may now reconstruct $\Phi\left(t;\theta\right)$ according to \eqref{E:48}, and then proceed with standard post-processing techniques.  The constraint imposed in \eqref{E:51} is mild, considering nominal ultrasound parameters.  Assuming, for instance, $T=210\mu\mbox{sec}$, $f_0=3\mbox{MHz}$, and $\sigma=630\mbox{nsec}$, we must choose $k_j\geq 12$.  The requirement that ${\bf{H}}$ of \eqref{E:28} be invertible, already imposes a stronger constraint on $k_j$, the $j$th Fourier coefficient, since $H\left(\frac{2\pi}{T}k_j\right)$ drops below $-3\mbox{dB}$ for $\left|k_j-630\right|>44$.          
\subsection{CS Approach for Signal Reconstruction} 
\label{sec:72}

Throughout the previous sections, we addressed the problem of ultrasound signal reconstruction, within the FRI framework.  As shown in \cite{Vetterli01}, for various FRI problems, the relationship between the unknown signal parameters and its subset of Fourier series coefficients takes the form of a spectral analysis problem.  The latter is then typically solved by applying techniques such as annihilating filter~\cite{Stoica01} or matrix pencil~\cite{Tapan01}.   In this section, we consider an alternative approach for reconstructing the signal defined in \eqref{E:09}, based on CS methodology~\cite{Candes01,Davenport01}.  

Assume that the time delays $\left\{t_l\right\}_{l=1}^L$ in \eqref{E:48} are quantized with a $\Delta_s$ quantization step, such that $t_l=q_l\Delta_s$, $q_l\in\mathbb{Z}$.  Using \eqref{E:55}, we may write the Fourier series coefficients of $\Phi\left(t;\theta\right)$ as:
\begin{align}
\label{E:34}
c_j&\approx\frac{1}{T}H\left(\frac{2\pi}{T}k_j\right)\sum_{l=1}^L{b_l e^{-i\frac{2\pi}{T}\Delta_s k_j q_l}}.
\end{align}
Let $N$ be the ratio $\lfloor T/\Delta_s\rfloor$.  Then \eqref{E:34} may be expressed in the following matrix form: 
\begin{equation}
\label{E:35}
\begin{split}
{\bf{c}}&\approx\frac{1}{T}{\bf{H\hat{V}}x}={\bf{Ax}},\\
\end{split}
\end{equation}
where ${\bf{H}}$ is the $K\times K$ diagonal matrix with $H\left(\frac{2\pi}{T}k_j\right)$ as its $j$th diagonal element, and ${\bf{x}}$ is a length $N$ vector, whose $j$th element equals $b_l$ for $j=q_l$, and $0$ otherwise.  Finally, ${\bf{\hat{V}}}$ is a $K\times N$ matrix, formed by taking the set $\kappa$ of rows from an $N\times N$ FFT matrix.

The formulation obtained in \eqref{E:35}, is a classic CS problem, where our goal is to reconstruct the $N$-dimensional vector ${\bf{x}}$, known to be $L$-sparse, with $L\ll N$, based on its projection onto a subset of $K$ orthogonal vectors, represented by the rows of ${\bf{A}}$.  This problem may be solved by various CS methods, as long as the sensing matrix ${\bf{A}}$ satisfies desired properties such as the Restricted Isometry Property (RIP) or coherence.  

In our case, ${\bf{A}}$ is formed by choosing $K$ rows from the Fourier basis.  Selecting these rows uniformly at random it may be shown that if 
\begin{equation}
\label{E:38}
K\geq CL\left(\log N\right)^4,
\end{equation}
for some positive constant $C$, then ${\bf{A}}$ obeys the RIP with large probability \cite{Rudelson01}.  As readily seen from \eqref{E:38}, the resolution of the grid, used for evaluating $\left\{t_l\right\}_{l=1}^L$, directly effects the RIP.  Recall that, by applying spectral analysis methods, one may reconstruct $\bf{x}$ from a minimal number of $2L$ samples, if it is indeed $L$-sparse.  However, these samples must be carefully chosen.  Using matrix pencil, for instance, the sensing vectors must be  consecutive.  Moreover, in any practical application, the measured data will be corrupted by noise, forcing us to use oversampling.  In contrast, the bound proposed in \eqref{E:38} regards random selection of the sensing vectors.  Additionally, applying the CS framework, we may effectively cope with the more general case, of reconstructing ${\bf{x}}$ which is not necessarily $L$-sparse.  

\section{Comparison between Recovery Methods}
\label{sec:08}
In this section, we provide results obtained by applying three recovery algorithms to ultrasound signals which were simulated using the {\it{Field II}} program~\cite{Jensen03}.  The evaluation was performed based on multiple beamformed signals, each calculated along the $\hat{z}$ axis ($\theta=0$) for a random phantom realization.  The phantom comprised $L$ strong reflectors, distributed along the $\hat{z}$ axis, and multiple additional reflectors, distributed throughout the entire imaging medium.  A measurement vector was obtained by projecting the beamformed signal onto a subset of its Fourier series coefficients.  Finally, each algorithm was evaluated for its success in recovering the strong reflectors' positions from the vector of measurements.  The first two algorithms which were evaluated were matrix pencil~\cite{Tapan01} and total least-squares approximation, enhanced by Cadzow's iterated algorithm \cite{Blu01}.  Both algorithms may be considered spectral analysis techniques.  The third algorithm was Orthogonal Matching Pursuit (OMP) \cite{Tropp01}, which is a CS method.  

The simulation setup is depicted in Fig. \ref{Fig:06}.  We created an aperture comprising 64 transducer elements, with central frequency $f_0=3.5\mbox{MHz}$.  The width of each element, measured along the ${\hat{x}}$ axis, was $c/{f_0}=0.44\mbox{mm}$, and the height, measured along the ${\hat{y}}$ axis, was $5\mbox{mm}$.  The elements were arranged along the $\hat{x}$ axis, with a $0.05\mbox{mm}$ kerf.  The transmitted pulse was simulated by exciting each element with two periods of a sinusoid at frequency $f_0$, where the delays were adjusted such that the transmission focal point was at depth $r=70\mbox{mm}$.  Additionally, Hanning apodization was used during transmission, by applying an appropriate excitation power to each element.    

\begin{figure}
\begin{minipage}{1\linewidth}
\centering
\centerline{\includegraphics[height=3.5cm]{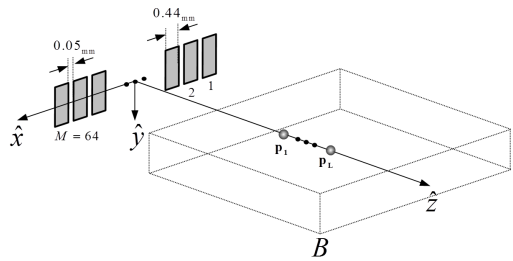}}
\end{minipage}
\caption{{\it{Field II}} simulation setup: $M=64$ elements are aligned along the $\hat{x}$ axis with a $0.05\mbox{mm}$ kerf.  The width of each element is $0.44\mbox{mm}$.  Speckle pattern is simulated by randomly distributing $10^5$ point reflectors within the box $B$.  Additionally, $L=6$ point reflectors are aligned along the $\hat{z}$ axis, also within the boundaries of the box.  The pulse is transmitted along the $\hat{z}$ axis, and the beamformed signal is constructed along the same line.}
\label{Fig:06}
\end{figure}

In each iteration, we constructed a random phantom, for which we simulated the beamformed signal.  The phantom was constructed in two stages.  We first created a speckle phantom, by drawing positions of $10^5$ point reflectors uniformly, at random, within the three-dimensional box $B=\left\{\left(x,y,z\right): |x| \leq 25\mbox{mm}, |y|  \leq 5\mbox{mm}, |z-60| \leq 30\mbox{mm}\right\}$. The corresponding amplitudes were also drawn randomly, with zero-mean and unit-variance Normal distribution.  We then generated a signal phantom, by drawing positions of $L=6$ point reflectors, $\left\{{\bf{p_l}}\right\}_{l=1}^L$, with $x_l=y_l=0$ and $z_l$ uniformly distributed in the interval $\left[35\mbox{mm},85\mbox{mm}\right)$.  These reflectors were assigned identical amplitudes, which were adjusted according to the SNR requirement, in the following manner:  for each of the two phantoms, we simulated the beamformed signal, acquired along $\theta=0$ following pulse transmission in the same direction.   Denoting the beamformed signal obtained for the first (speckle) phantom by $n\left(t;\theta=0\right)$ and that obtained for the second (signal) phantom by $\Phi\left(t;\theta=0\right)$, we defined
\begin{align}
\label{E:SNR1}
{\mbox{SNR}}=10\log_{10}\frac{\int_0^T{\left|\Phi\left(t;\theta=0\right)\right|^2dt}}{\int_0^T{\left|n\left(t;\theta=0\right)\right|^2dt}}.
\end{align} 
The amplitudes of the reflectors comprising the second phantom were modified, such that \eqref{E:SNR1} complied with the desired SNR value.  After this calibration, we combined the two phantoms into a single one, for which we generated an individual beamformed signal realization.  The detected signals and the resulting beamformed signal were simulated at sampling rate $f_s=100\mbox{MHz}$.  Since the spectrum of the detected pulses decayed to $-50\mbox{dB}$ at $\approx 6\mbox{MHz}$, this rate was far beyond Nyquist.  Hanning apodization was used for constructing the beamformed signal, by applying appropriate weights to the detected signals.  This type of apodization may be easily implemented with both our Xampling schemes, by replacing the average in \eqref{E:24} by a weighted one. 

Fig. \ref{Fig:07} illustrates the method by which we simulated a realization of the noisy beamformed signal.  This image was obtained by applying standard imaging techniques to an individual phantom.  We are interested in recovering the strong reflections aligned along the $\hat{z}$ axis.  The corresponding beamformed signal was corrupted by speckles, originating in the multiple point reflectors scattered throughout the medium.  The phantom was calibrated such that the SNR of the beamformed signal along $\theta=0$, defined in \eqref{E:SNR1}, was $15\mbox{dB}$.            

\begin{figure}
\begin{minipage}{1\linewidth}
\centering
\centerline{\includegraphics[height=5.0cm]{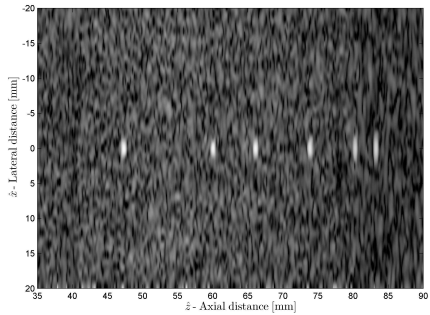}}
\end{minipage}
\caption{Image obtained by applying standard imaging techniques to an individual phantom realization.  Our goal is to recover the $L=6$ strong reflectors aligned along the $\hat{z}$ axis.  $10^5$ point reflectors were distributed in the imaging plain, resulting in echoes which corrupted the detected signals.  In the ultrasound image, these reflections are manifested in a speckle pattern.   The phantom was calibrated such that the SNR of the beamformed signal, calculated along $\hat{\theta=0}$, defined in \eqref{E:SNR1}, was $15\mbox{dB}$.}
\label{Fig:07}
\end{figure}

Having generated the beamformed signal, we obtained a measurement vector,  by projecting the signal onto a subset of its $K$ Fourier series coefficients, where $K=2\lceil \eta L \rceil+1$, and $\eta>1$ is the desired oversampling factor.  For the spectral analysis techniques, we chose the coefficients consecutively, around $k_0=\lceil f_0 T \rceil$.  OMP was tested using both this selection of coefficients, and a random selection, taken such that $H\left(\frac{2\pi}{T}k_j\right)$ is above $-2\mbox{dB}$.  With this selection, we obtain samples which are better spread in the frequency domain.  We emphasize, that the coefficients were drawn once, for each choice of $\eta$. An additional degree of freedom, using the OMP method, regards the density of the reconstruction grid, determined by $N$.  We set  $N=1860$, complying with a sampling frequency $f_s=20\mbox{MHz}$, of order typically used in imaging devices.  

Recovery was evaluated based on the estimated time delays.  These were compared to the delays associated with the known reflector positions, $t_l=2z_l/c$. At the $i$th iteration, we examined, for each algorithm, all possible matches between actual delays $\left\{t_l\right\}_{l=1}^L$, and  estimated delays $\left\{\hat{t}_l\right\}_{l=1}^L$.  Of all possible permutations (a total number of $L!$), we selected the one for which the  number of matches, achieving error smaller than the width of $h\left(t\right)$, was maximal.  Denoting this number by $S_i^{(q)}$, $q\in\left\{1,2,3,4\right\}$ corresponding to the evaluated method, we estimate the probability of recovery by the $q$th method as  
\begin{equation}
\label{E:39}
P^{(q)}=\frac{1}{LI}\sum_{i=1}^I{S_i^{(q)}},
\end{equation}
where $I$ is the total number of iterations, set to 500 in our simulation.  We note that all reconstruction algorithms require that we first calculate $H\left(\frac{2\pi}{T}k_j\right)$.  For this purpose, we simulated the signal beamformed along $\theta=0$, for a phantom which comprised a single reflector at the transmission focal point $\left(x,y,z\right)=\left(0,0,70\mbox{mm}\right)$.  We used the detected signal, depicted in Fig. \ref{Fig:08}, for calculating $H\left(\frac{2\pi}{T}k_j\right)$.      

\begin{figure}
\begin{minipage}{1\linewidth}
\centering
\centerline{\includegraphics[height=4.0cm]{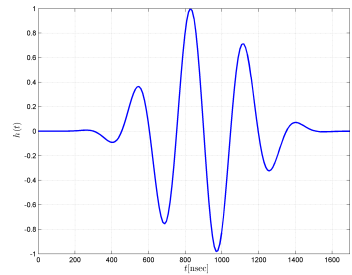}}
\end{minipage}
\caption{$h\left(t\right)$ evaluated from the beamformed signal, calculated for a single reflector using {\it{Field II}} simulator.  The reflector was positioned at the transmission focal point.}
\label{Fig:08}
\end{figure}

\begin{figure*}
\begin{minipage}{.23\linewidth}
\centering
\centerline{\includegraphics[height=3.7cm]{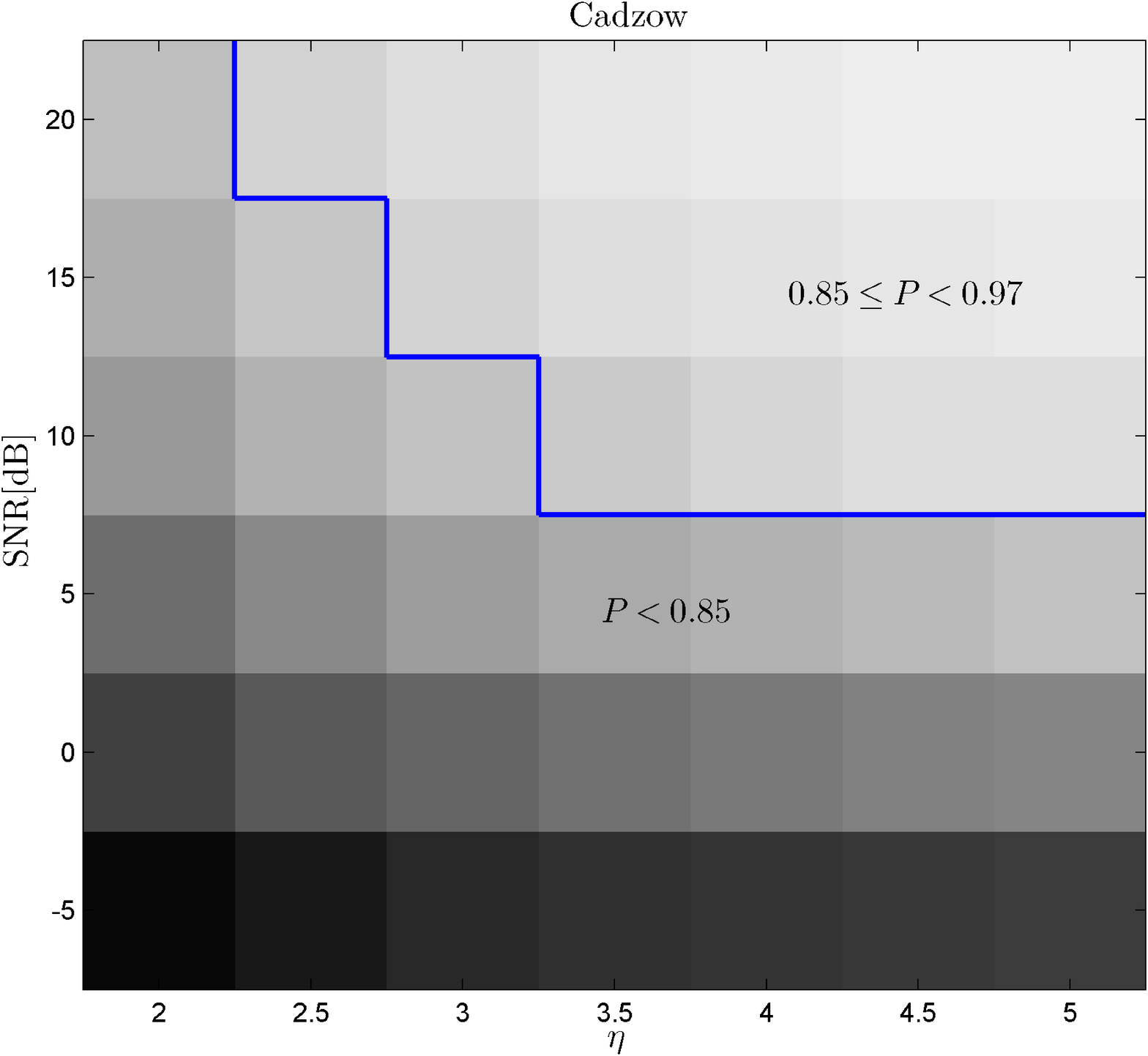}}
\centerline{(a)}
\end{minipage}
\begin{minipage}{.23\linewidth}
\centering
\centerline{\includegraphics[height=3.7cm]{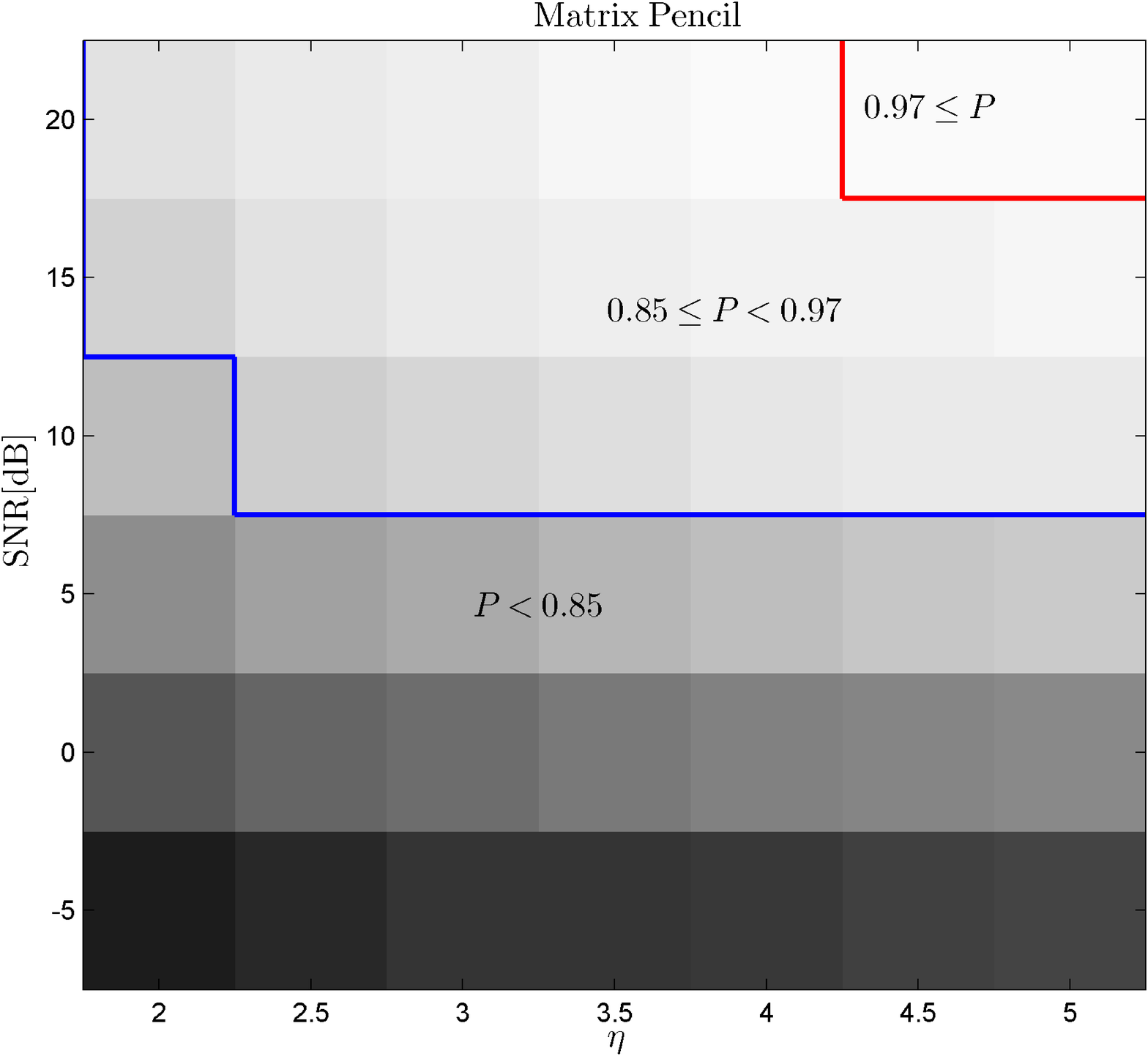}}
\centerline{(b)}
\end{minipage}
\begin{minipage}{.23\linewidth}
\centering
\centerline{\includegraphics[height=3.7cm]{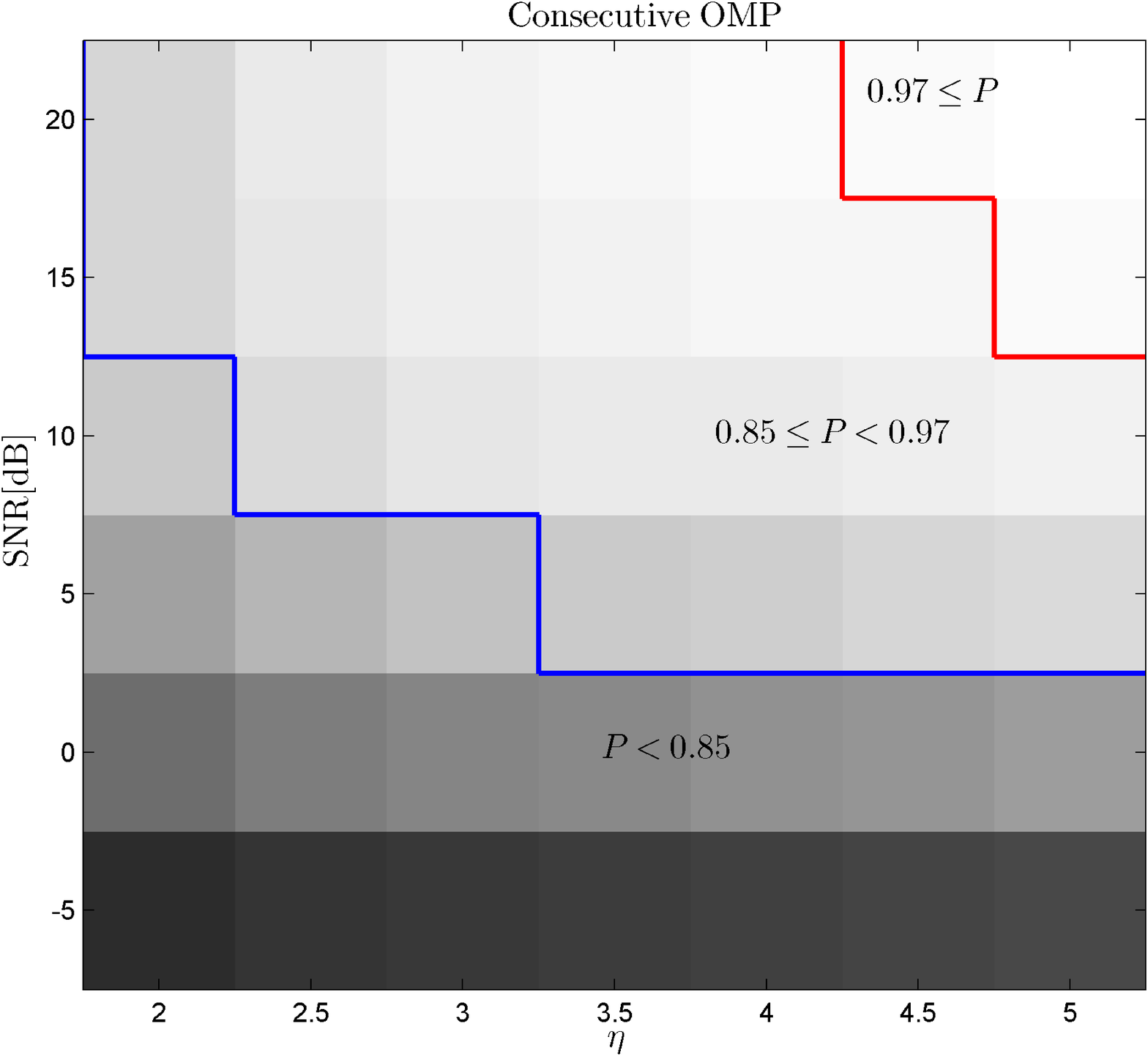}}
\centerline{(c)}
\end{minipage}
\begin{minipage}{.26\linewidth}
\centering
\centerline{\includegraphics[height=3.7cm]{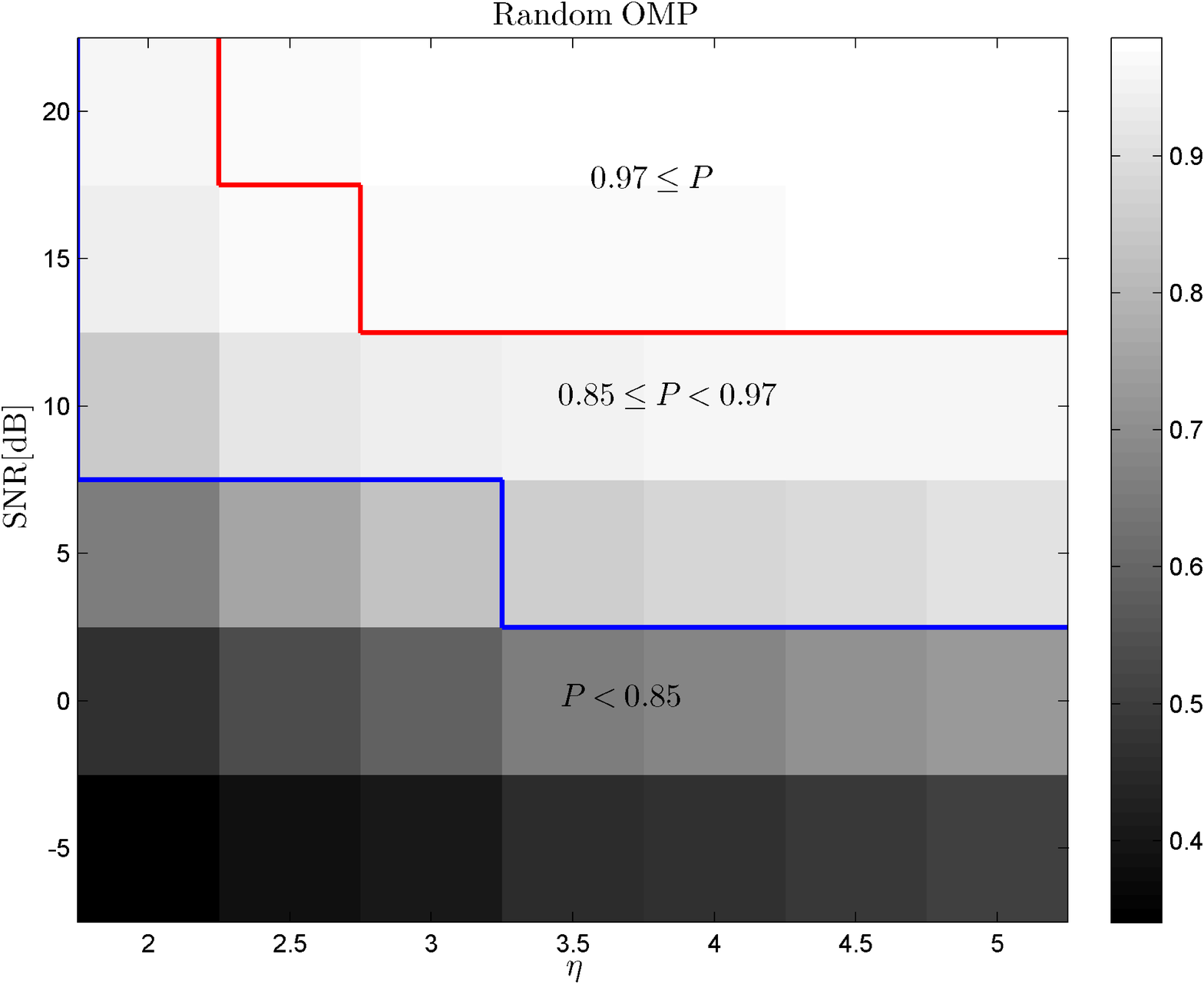}}
\centerline{(d)}
\end{minipage}
\caption{Probability of reconstruction vs. SNR and oversampling factor, $\eta$, using four methods: (a) Total least-squares, enhanced by Cadzow's iterated algorithm, (b) matrix pencil, (c) OMP with consecutive Fourier series coefficients,  (d) OMP with Fourier series coefficients randomly distributed, such that $H\left(\frac{2\pi}{T}k_j\right)$ is above $-2\mbox{dB}$, $\forall k_j\in \kappa$.  Signals were simulated using {\it{Field II}} program, where SNR is defined in \eqref{E:SNR1}.}
\label{Fig:09}
\end{figure*}
%

The simulation results obtained for multiple combinations of SNR and oversampling factor are illustrated in Fig. \ref{Fig:09}.  The calculated recovery probabilities are represented by gray-levels, where a common color-bar was used for all plots.  For clarity, we plotted a line separating between probabilities lower than 0.85 and probabilities above 0.85, and a line separating between probabilities lower than 0.97 and probabilities above 0.97. Of the two spectral analysis techniques, matrix pencil appears preferable, as it obtains high probability values over a wider range of SNR and oversampling.  Both OMP methods outperformed the spectral analysis ones, with an obvious advantage to random OMP.      

An additional aspect which should be taken into consideration, when choosing the reconstruction method, regards the complexity of the Xampling hardware.  Using the Xampling scheme proposed in \cite{Tur01}, random selection  of Fourier series coefficients will increase the hardware complexity: in such case, the sampling kernel, e.g. SoS, must be specifically designed for the choice of coefficients.  This is in contrast with the relatively simple kernel, applied for a consecutive choice of coefficients.  On the other hand, the Xampling scheme proposed in \cite{Gedalyahu01} is practically invariant to the manner in which the coefficients are selected.  

\section{Experiments on Cardiac Ultrasound Data}	
\label{sec:09}
\begin{figure*}
\begin{minipage}{.33\linewidth}
\centering
\centerline{\includegraphics[width=5.0cm]{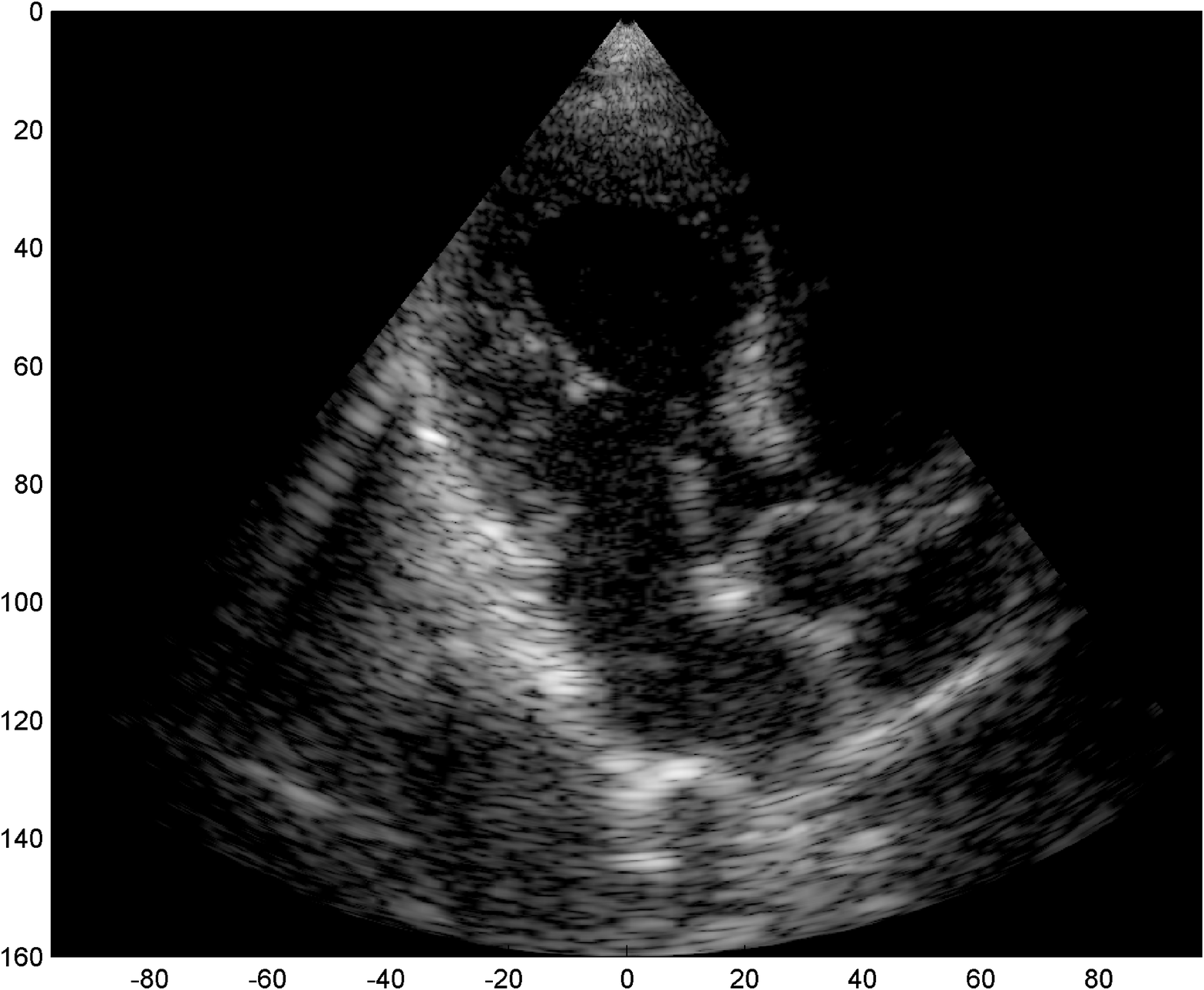}}
\centerline{(a)}
\end{minipage}
\begin{minipage}{.33\linewidth}
\centering
\centerline{\includegraphics[width=5.0cm]{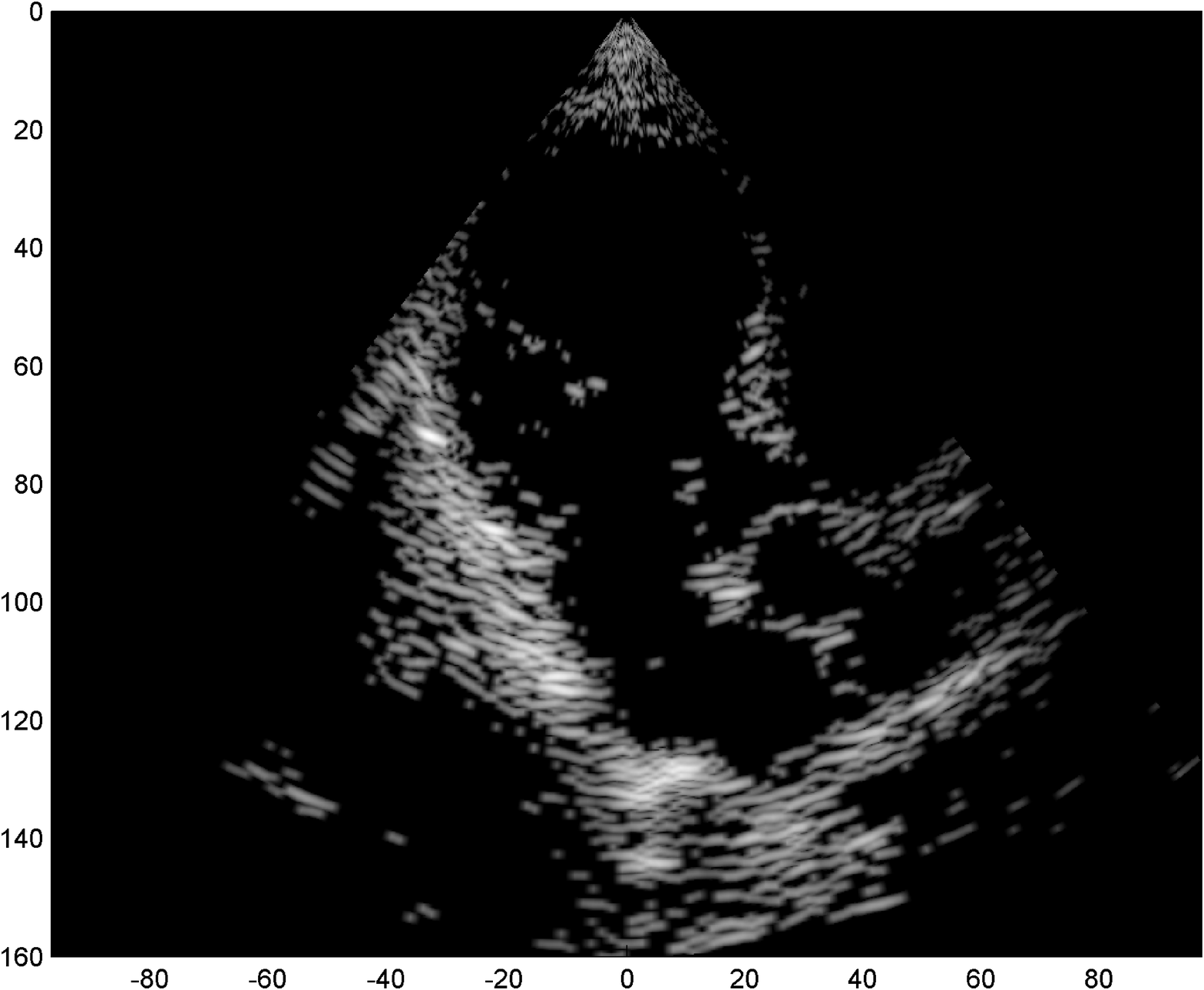}}
\centerline{(b)}
\end{minipage}
\begin{minipage}{.34\linewidth}
\centering
\centerline{\includegraphics[width=5.0cm]{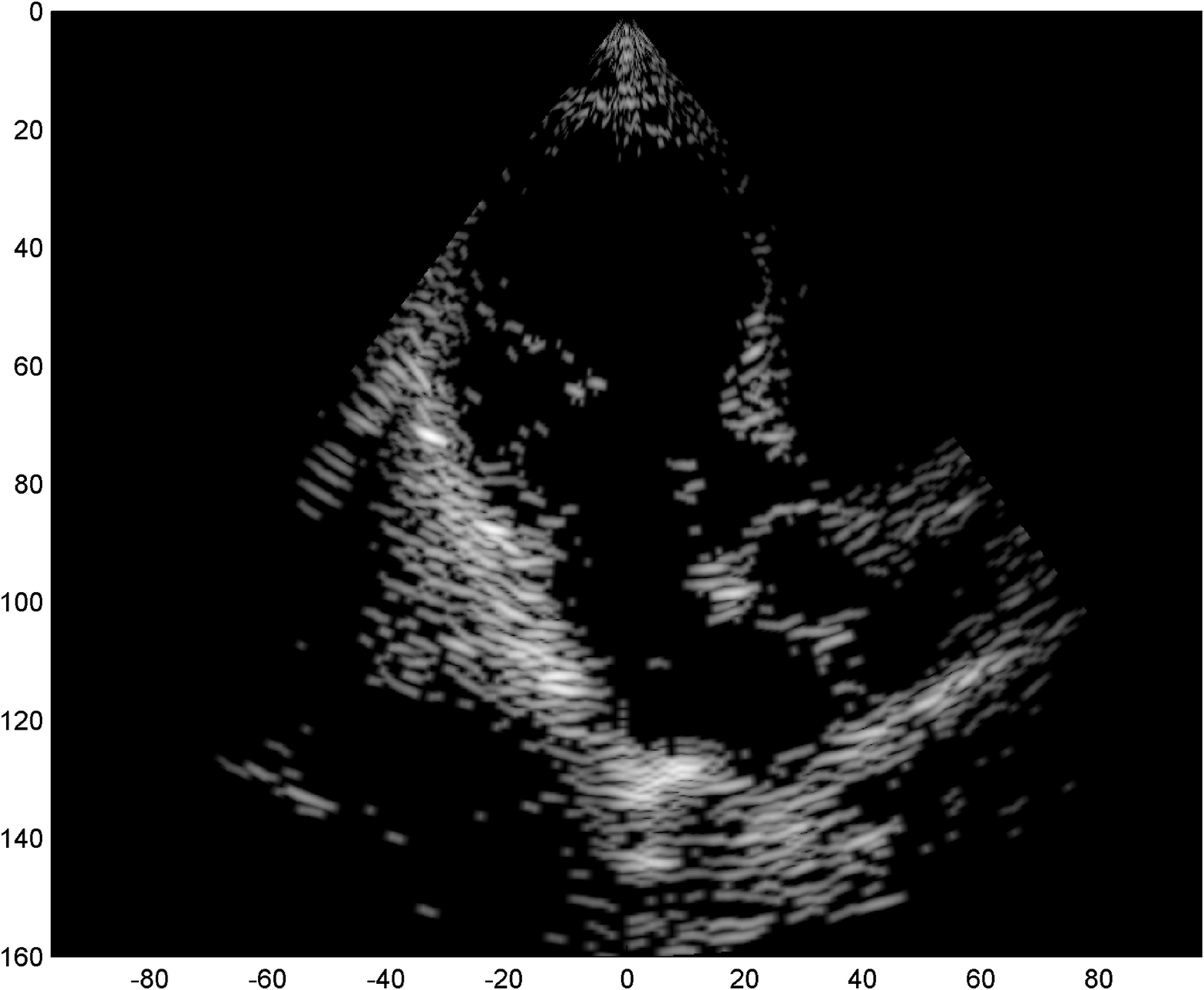}}
\centerline{(c)}
\end{minipage}
\caption{Cardiac images generated by Xampling and using traditional methods. (a) standard beamforming applied to data sampled at the Nyquist-rate.  (b) applying the non-approximated Xampling scheme of Fig.~\ref{Fig:04}.  (c) applying the final Xampling scheme of Fig.~\ref{Fig:05}.}
\label{Fig:10}
\end{figure*}


In this section, we examine results obtained by applying our Xampling schemes, illustrated in Figs.~\ref{Fig:04}~and~\ref{Fig:05}, to raw RF data, acquired and stored for cardiac images of a healthy consenting volunteer.  The acquisition was performed using a GE breadboard ultrasonic scanner of 64 acquisition channels.  The transducer employed was a 64-element phased array probe, with $2.5\mbox{MHz}$ central frequency,  operating in second harmonic imaging mode: 3 half cycle pulses are transmitted at $1.7\mbox{MHz}$, resulting in a signal characterized by a rather narrow bandpass bandwidth, centered at $1.7\mbox{MHz}$.  The corresponding second harmonic signal, centered at $3.4\mbox{MHz}$, is then acquired.  The signal detected in each acquisition channel is amplified and digitized at a sampling-rate of $50\mbox{MHz}$.   Data from all channels were acquired along 120 beams, forming a $60^{\circ}$ sector, where imaging to a depth of $z=16\mbox{cm}$, we have $T=207{\mu\mbox{sec}}$.  The imaging results are illustrated in Fig. \ref{Fig:10}. 

The first image (a) was generated using the standard technique, applying beamforming to data first sampled at the Nyquist-rate, and then down-sampled, exploiting its limited essential bandwidth.  For a single scanline, sampling at $50 {\mbox{MHz}}$, we acquire $10389$ real-valued samples from each element, which are then down-sampled, to $1662$ real-valued samples, used for beamforming.  The resulting image is used as reference, where our goal is to reproduce the macroscopic reflectors observed in this image with our Xampling schemes.   

We begin by applying the scheme illustrated in Fig. \ref{Fig:04}, utilizing the analog kernels defined in \eqref{E:27}.   Modulation with the kernels is simulated digitally.   Assuming $L=25$ reflectors, and using two-fold oversampling, $\kappa$ comprises $K=100$ consecutive indices.  With such selection, the corresponding frequency samples practically cover the essential spectrum of $h\left(t\right)$.  Since each sample is complex, we get an eight-fold reduction in sample-rate.  Having estimated the Fourier series coefficients of $\Phi\left(t;\theta\right)$, we obtain its parametric representation by solving \eqref{E:35} using OMP.  We then reconstruct $\Phi\left(t;\theta\right)$ according to \eqref{E:48}, that is we apply phase shifts to the modulated pulses,  based on the extracted coefficients' phases.  The resulting image (b) depicts the strong perturbations observed in (a).  Moreover, isolated reflectors at the proximity of the array ($z\approx 6{\mbox{cm}}$) remain in focus.  

We next apply the approximated scheme, illustrated in Fig. \ref{Fig:05}: for every $k_j\in\kappa$, $1\leq m \leq M$ and $\theta$, we find $N_1$ and $N_2$ of \eqref{E:29} such that $\rho^2 \approx 0.95$.  This process is performed numerically, off-line, based on our imaging setup.  Consequently, we construct $\left\{{\bf{A_m}}\right\}_{m=1}^M$ off-line, according to \eqref{E:33}.  Choosing this level of approximation, we end up with a seven-fold reduction in sample rate, where, for the construction of a single scanline, an average of $116$ complex samples must be taken from each element.  We point out that in this scenario, the maximal number of samples, taken from certain elements, reaches $133$ for specific values of $\theta$.  Thus, if a common rate is to be used for all sensors, for all values of $\theta$, we may still achieve a six-fold reduction in sample rate.  As before, we use OMP in order to obtain  $\Phi\left(t;\theta\right)$'s parametric representation, and reconstruct it based on our generalized FRI model proposed in \eqref{E:48}.  The resulting image (c) appears very similar to (b). 

Table \ref{Tab:01} gathers SNR values, calculated for the beamformed signals estimated using both our Xampling schemes, after envelope detection with the Hilbert transform.  The values were calculated with respect to the envelopes of the beamformed signals, obtained by standard imaging.  Explicitly, let $\Phi\left(t;\theta_i\right)$ denote the beamformed signal obtained by standard beamforming along the direction $\theta_i$, $i=1,2,...,I$, let $\hat{\Phi}\left(t;\theta_i\right)$ denote the beamformed signal reconstructed from the parameters recovered by compressed beamforming along the same direction, and let $H\left(\cdot\right)$ denote the Hilbert transform.  For the set of $I=120$ scanlines, we defined the SNR as
\begin{align}
\label{E:SNR2}
\mbox{SNR}=10\log_{10}\frac{\sum_{i=1}^I{\int_0^T{\left|\mbox{H}(\Phi\left(t;\theta_i\right))\right|^2dt}}}{\sum_{i=1}^I{\int_0^T{\left|\mbox{H}(\hat{\Phi}\left(t;\theta_i\right))-\mbox{H}(\Phi\left(t;\theta_i\right))\right|^2dt}}}.
\end{align}
This calculation was repeated when reconstructing the signals without the random phase assumption, proposed in Section~\ref{sec:71}.  For the latter case, reconstruction of a real-valued $\hat{\Phi}\left(t;\theta_i\right)$, given complex coefficients, may be heuristically achieved by either ignoring the coefficients' imaginary part, or by taking their modulus.  It may be seen that, weighting over all $120$ beamformed signals, the random phase assumption achieves a relatively minor improvement ($0.1$-$0.15\mbox{dB}$) compared to reconstruction using the modulus of the coefficients.    However, when examining individual signals, we observed that, for certain values of $\theta_i$, the improvement exceeded  $1.5\mbox{dB}$.  

\begin{table}[h]
\centering
\caption{SNR in [dB] of $\Phi\left(t;\theta\right)$ obtained with the proposed Xampling schemes and three reconstruction methods}
\begin{tabular}{|p{1.2in}||p{0.85in}|p{0.85in}|} \hline
 & \multicolumn{2}{c|}{Xampling Method} \\ \cline{2-3}
Reconstruction Method & Distorted Kernels  (Fig. \ref{Fig:04}) & Approximated Scheme (Fig. \ref{Fig:05}) \\ \hline\hline
Random Phase 	& {6.47} &  {5.89} \\ \hline
Real part of Residues 	& {4.59} &  {4.03} \\ \hline
Modulus of Residues 	& {6.32} &  {5.79} \\ \hline
\end{tabular}
\label{Tab:01}.
\end{table}

We emphasize, that the calculated SNR values provide a useful measure for quantitatively comparing the different Xampling and reconstruction approaches.  However, they are of smaller value when attempting to evaluate the overall performance of Xampling, compared to standard imaging: recall that our scheme is aimed at reproducing only strong pulses, reflected from macroscopic reflectors.  The reference signal, on the other hand, generated by standard technique, already contains the  additional speckle component, caused by multiple microscopic perturbations.  A possible approach for evaluating the overall performance of either Xampling scheme, would be to examine its success rate in recovering strong reflections, detected by standard beamforming.  For this purpose, we tracked the $L$ strongest local maxima in each beamformed signal.  If the Xampling scheme recovered a pulse within the range of $1.2\mbox{mm}$ from a certain maximum, we say that this maximum was successfully detected.  Certain pulses, detected by Xampling, may match more than one maximum in the beamformed signal.  In such case, we choose the one-to-one mapping which achieves smallest MSE.  Applying this evaluation method to signals Xampled using our approximated scheme, and reconstructed with the random phase assumption, we conclude that the reconstruction successfully retrieves $70.4 \%$ of the significant maxima, with standard deviation of the error being approximately $0.42\mbox{mm}$. 
 
\section{Conclusion}
\label{sec:10}
In this work, we generalized the Xampling method proposed in \cite{Tur01}, to a scheme applied to an array of multiple receiving elements, allowing reconstruction of a two-dimensional ultrasound image. At the heart of this generalization was the proposal that the one-dimensional Xampling method derived in \cite{Tur01} be applied to signals obtained by beamforming.  Such signals exhibit enhanced SNR, compared to the individual signals detected by the array elements.  Moreover, they depict reflections which originate in a much narrower sector, than that initially radiated by the transmitted pulse.  A second key observation, which made our approach feasible, regarded the integration of the beamforming process into the filtering part of the Xampling scheme.  

The first approach we purposed comprised multiple modulation and integration channels, utilizing analog kernels.  We next showed that the parametric representation of the beamformed signal may be well approximated, from projections of the detected signals onto appropriate subsets of their Fourier series coefficients.  The contribution of our schemes regards both the reduction in sample rate, but additionally, the resulting reduction in the rate of data transmission from the system front-end to the processing unit.  In particular, our second approach is significant even when preliminary sampling is performed at the Nyquist-rate.  In such a case, it allows a reduction in data transmission rate, by a relatively simple linear transformation, applied to the sampled data.          


An additional contribution of our work regards the method by which we reconstruct the ultrasound signal, assumed to obey a specific FRI structure, from a subset of its frequency samples.  Rather than using traditional spectral analysis techniques, we formulate the relationship between the signal's samples to its unknown parameters as a CS problem.  The latter may be efficiently solved using a greedy algorithm such as the OMP.  We show that, in our scenario, CS is generally comparable to spectral analysis methods, managing to achieve similar success rates with sample sets of equal cardinality.  Moreover, working in a noisy regime, CS typically outperformed spectral analysis methods, provided that the frequency samples were highly spread over the essential spectrum of the signal.  Using actual cardiac data, a relatively large number of reflectors was assumed.  Consequently, by simply choosing the Fourier series coefficients consecutively, as in the spectral analysis techniques, we end up with the necessary wide distribution.  However, as shown in our simulations, CS approach inherently allows a wide distribution of samples, even when the cardinality of the sample set is small, since we are not obliged to unique configurations of samples.

A final observation discussed in our work, regards the generalization of the signal model proposed in \cite{Tur01}, allowing additional, unknown phase shifts, of the detected pulses.  We show that these shifts may be estimated by appropriate interpretation of the extracted coefficients, without changing the recovery method.   

Combining the random phase assumption with our proposed Xampling schemes and the CS recovery method, we construct two-dimensional ultrasound images, which well depict strong perturbations in the tissue, while achieving up to seven-fold reduction of sample rate, compared to standard imaging techniques.                                

\appendices
\section{beamformed signal support}

We assume $h\left(t\right)$ to be supported on $\left[0,\Delta\right)$, and that the support of $\varphi_m\left(t\right)$  is contained in $\left[0,T\right)$.  The last assumption may be justified by the fact that the pulse is transmitted at $t=0$, such that reflections may only be detected for $t\geq 0$.  Additionally, the penetration depth of the transmitted pulse allows us to set $T$, such that all reflections arriving at $t\geq T$ are below the noise level.  

For all $1\leq l\leq L$ and $1\leq m\leq M$:
\begin{align}
\label{E:40}
\begin{split}
t_{l,m}+\Delta \leq T,
\end{split}
\end{align}
Applying the relation $t_{l,m}=\tau_m\left(t_l;\theta\right)$, justified in Section~\ref{sec:51}, and using the fact that $\tau_m\left(t;\theta\right)$ is non-decreasing for $t\geq 0$ we conclude that
\begin{align}
\begin{split}
\label{E:41}
t_l&\leq  \tau_m^{-1}\left(T-\Delta;\theta\right),
\end{split}
\end{align}
$\tau_m^{-1}\left(t;\theta\right)$ being the inverse of $\tau_m\left(t;\theta\right)$.  Explicitly: 
\begin{align}
\begin{split}
\label{E:42}
\begin{array}{ll}
{\tau_m^{-1}\left(t;\theta\right)}=\frac{t^2-\gamma_m^2}{t-\gamma_m\sin\theta}, & t\geq \gamma_m.
\end{array}
\end{split} 
\end{align}
Assuming that $\Delta\ll T$, then, since \eqref{E:41} is true for every $1 \leq m \leq M$, we may write:
\begin{align}
\begin{split}
\label{E:43}
t_l \leq \min_{1\leq m\leq M}{\tau_m^{-1}\left(T;\theta\right)}.
\end{split}
\end{align}
This allows us to set the following upper bound on the support of $\Phi\left(t;\theta\right)$:
\begin{align}
\begin{split}
\label{E:44}
 T_B\left(\theta\right) = \min_{1\leq m\leq M}{\tau_m^{-1}\left(T;\theta\right)},
\end{split}
\end{align}
once again, using the assumption that $\Delta\ll T$.  From \eqref{E:42} it is readily seen that $ T_B\left(\theta\right)\leq T$, since we can always find $\gamma_m$ with sign opposite to that of $\sin\theta$, such that:
\begin{align}
\begin{split}
\label{E:45}
{\tau_m^{-1}\left(T;\theta\right)}=\frac{T^2-\gamma_m^2}{T+|\gamma_m\sin\theta|}\leq \frac{T^2-\gamma_m^2}{T}\leq T.
\end{split} 
\end{align}
Finally, by construction of $T_B\left(\theta\right)$ we see that, for all $1\leq m\leq M$, $\tau_m\left(T_B\left(\theta\right);\theta\right) \leq T$. 
 
%

\bibliographystyle{IEEEtran}

\begin{thebibliography}{10}
\providecommand{\url}[1]{#1}
\csname url@samestyle\endcsname
\providecommand{\newblock}{\relax}
\providecommand{\bibinfo}[2]{#2}
\providecommand{\BIBentrySTDinterwordspacing}{\spaceskip=0pt\relax}
\providecommand{\BIBentryALTinterwordstretchfactor}{4}
\providecommand{\BIBentryALTinterwordspacing}{\spaceskip=\fontdimen2\font plus
\BIBentryALTinterwordstretchfactor\fontdimen3\font minus
  \fontdimen4\font\relax}
\providecommand{\BIBforeignlanguage}[2]{{%
\expandafter\ifx\csname l@#1\endcsname\relax
\typeout{** WARNING: IEEEtran.bst: No hyphenation pattern has been}%
\typeout{** loaded for the language `#1'. Using the pattern for}%
\typeout{** the default language instead.}%
\else
\language=\csname l@#1\endcsname
\fi
#2}}
\providecommand{\BIBdecl}{\relax}
\BIBdecl

\bibitem{Jensen01}
J.~A. Jensen, ``Linear description of ultrasound imaging systems,'' Notes for
  the International Summer School on Advanced Ultrasound Imaging, Technical
  University of Denmark, 1999.

\bibitem{Szabo01}
T.~L. Szabo, ``Diagnostics ultrasound imaging: Inside out,'' in \emph{Academic
  Press Series in Biomedical Engineering}, J.~Bronzino, Ed.\hskip 1em plus
  0.5em minus 0.4em\relax Elsevier Academic Press, 2004, pp. Ch. 7, 10.

\bibitem{Shannon01}
C.~E. Shannon, ``Communication in the presence of noise,'' \emph{Proc. IRE},
  vol.~37, pp. 10 -- 21, 1949.

\bibitem{Candes01}
E.~J. Candes and M.~Wakin, ``An introduction to compressive sampling,''
  \emph{IEEE Signal Processing Magazine}, vol. 25, Issue 2, pp. 21 -- 30, 2008.

\bibitem{Davenport01}
M.~A. Davenport, M.~F. Duarte, Y.~C. Eldar, and G.~Kutyniok, ``Introduction to
  compressed sensing,'' in \emph{Compressed Sensing: Theory and Applications},
  Y.~C. Eldar and G.~Kutyniok, Eds.\hskip 1em plus 0.5em minus 0.4em\relax
  Cambridge University Press, 2012.

\bibitem{Vetterli01}
M.~Vetterli, P.~Marziliano, and T.~Blu, ``Sampling signals with finite rate of
  innovation,'' \emph{IEEE Transactions on Signal Processing}, vol. 50, No. 6,
  pp. 1417 -- 1428, 2002.

\bibitem{Michaeli01}
T.~Michaeli and Y.~C. Eldar, ``Xampling at the rate of innovation,'' \emph{IEEE
  Transactions on Signal Processing}, vol. 60, Issue 3, pp. 1121 -- 1133, 2012.

\bibitem{Uriguen01}
J.~Uriguen, Y.~C. Eldar, P.~L. Dragotti, and Z.~Ben-Haim, ``Sampling at the
  rate of innovation: Theory and applications,'' in \emph{Compressed Sensing:
  Theory and Applications}, Y.~C. Eldar and G.~Kutyniok, Eds.\hskip 1em plus
  0.5em minus 0.4em\relax Cambridge University Press, 2012.

\bibitem{Mishali01}
M.~Mishali, Y.~C. Eldar, O.~Dounaevsky, and E.~Shoshan, ``Xampling: Analog to
  digital at sub-nyquist rates,'' \emph{IET Journal of Circuits, Devices and
  Systems}, vol. 5, Issue 1, pp. 8 -- 20, 2011.

\bibitem{Mishali05}
M.~Mishali, Y.~C. Eldar, and A.~Elron, ``Xampling: Signal acquisition and
  processing in union of subspaces,'' \emph{IEEE Transactions on Signal
  Processing}, vol. 59, Issue 10, pp. 4719 -- 4734, 2011.

\bibitem{Mishali06}
M.~Mishali and Y.~C. Eldar, ``Xampling: Compressed sensing for analog
  signals,'' in \emph{Compressed Sensing: Theory and Applications}, Y.~C. Eldar
  and G.~Kutyniok, Eds.\hskip 1em plus 0.5em minus 0.4em\relax Cambridge
  University Press, 2012.

\bibitem{Tur01}
R.~Tur, Y.~C. Eldar, and Z.~Friedman, ``Innovation rate sampling of pulse
  streams with application to ultrasound imaging,'' \emph{IEEE Trans. on Signal
  Processing}, vol. 59, No. 4, pp. 1827 -- 1842, 2011.

\bibitem{Gedalyahu01}
K.~Gedalyahu, R.~Tur, and Y.~C. Eldar, ``Multichannel sampling of pulse streams
  at the rate of innovation,'' \emph{IEEE Trans. on Signal Processing}, vol.
  59, No. 4, pp. 1491 -- 1504, 2011.

\bibitem{Wagner02}
N.~Wagner, Y.~C. Eldar, A.~Feuer, and Z.~Friedman, ``Compressed beamforming
  with applications to ultrasound imaging,'' \emph{to appear in ICASSP}, 2012.

\bibitem{Wagner03}
N.~Wagner, Y.~C. Eldar, A.~Feuer, G.~Danin, and Z.~Friedman, ``Xampling in
  ultrasound imaging,'' \emph{SPIE Proceedings}, vol. 7968, 2011.

\bibitem{Kinsler01}
L.~E. Kinsler, A.~R. Frey, A.~B. Coppens, and J.~V. Sanders, \emph{Fundamentals
  of Acoustics}, 3rd~ed.\hskip 1em plus 0.5em minus 0.4em\relax New York: John
  Wiley and Sons, 1982.

\bibitem{Jensen02}
J.~A. Jensen, ``Ultrasound imaging and its modeling,'' \emph{Topics in Applied
  Physics}, vol.~84, pp. 135 -- 165, 2002.

\bibitem{Stoica01}
P.~Stoica and R.~Moses, \emph{Introduction to Spectral Analysis}.\hskip 1em
  plus 0.5em minus 0.4em\relax Englewood Cliffs, NJ: Prentice-Hall, 2000.

\bibitem{Tapan01}
T.~K. Sarkar and O.~Pereira, ``Using the matrix pencil method to estimate the
  parameters of a sum of complex exponentials,'' \emph{IEEE Antennas and
  Propagation Magazine}, vol. 37, No. 1.

\bibitem{Shiavi01}
R.~Shiavi, \emph{Introduction to Applied Statistical Signal Analysis - Guide to
  Biomedical and Electrical Engineering Applications}, 3rd~ed.\hskip 1em plus
  0.5em minus 0.4em\relax Burlington, MA: Elsevier Inc., 2007.

\bibitem{Brown01}
A.~E. Brown, ``Rationale and summary of methods for determining ultrasonic
  properties of materials at lawrence livermore national laboratory,'' Lawrence
  Livermore National Lab., CA (United States), Tech. Rep. UCRL-ID--119958,
  1995.

\bibitem{Rudelson01}
M.~Rudelson and R.~Vershynin, ``On sparse reconstruction from fourier and
  gaussian measurements,'' \emph{Communications on Pure and Applied
  Mathematics}, vol.~61, 2008.

\bibitem{Jensen03}
J.~A. Jensen, ``Field: A program for simulating ultrasound systems,''
  \emph{Medical $\&$ Biological Engineering $\&$ Computing}, vol. 34,
  Supplement 1, 1996.

\bibitem{Blu01}
T.~Blu, P.~L. Dragotti, M.~Vetterli, P.~Marziliano, and L.~Coulot, ``Sparse
  sampling of signal innovations,'' \emph{IEEE Signal Process. Mag.}, vol. 25,
  No. 2, pp. 31 -- 40, 2008.

\bibitem{Tropp01}
J.~A. Tropp and A.~C. Gilbert, ``Signal recovery from random measurements via
  orthogonal matching pursuit,'' \emph{IEEE Transactions on Information
  Theory}, vol. 53, No. 12, 2007.

\end{thebibliography}

\end{document}